\newtheorem{thm}{Theorem}
\newtheorem{lem}[thm]{Lemma}
\newtheorem*{thm*}{Theorem}
\DeclareMathAlphabet{\mathbbm}{U}{bbm}{m}{n}
\SetMathAlphabet\mathbbm{bold}{U}{bbm}{bx}{n}
\newcommand{\1}{\mathbbm{1}}   
\newcommand{\bea}{\begin{eqnarray}}
\newcommand{\eea}{\end{eqnarray}}
\newcommand{\tr}[1]{{\rm tr}\left[#1\right]}
\newcommand{\C}{\mathbb{C}}
\newcommand{\E}{\mathbb{E}}
\newcommand{\EE}{\mathcal{E}}
\newcommand{\FF}{\mathcal{F}}
\renewcommand{\>}{\rangle}
\newcommand{\<}{\langle}
\begin{document}

\title{Entanglement, fractional magnetization and long-range interactions}
\author{Andrea Cadarso}
\affiliation{Instituto de F\'isica Fundamental, IFF-CSIC, Serrano 113-bis, 28006 Madrid, Spain}
\affiliation{Departamento de An\'{a}lisis Matem\'{a}tico, Universidad Complutense de Madrid, 24040 Madrid, Spain}
\author{Mikel Sanz}
\affiliation{Max Planck Institut f\"{u}r Quantenoptik, Hans-Kopfermann-Str. 1, 85748 Garching, Germany}
\author{Michael M. Wolf}
\affiliation{Department of Mathematics, Technische Universit\"{a}t M\"{u}nchen, 85748 Garching, Germany}
\author{J. Ignacio Cirac}
\affiliation{Max Planck Institut f\"{u}r Quantenoptik, Hans-Kopfermann-Str. 1, 85748 Garching, Germany}
\author{David P\'{e}rez-Garc\'{\i}a}
\affiliation{Departamento de An\'{a}lisis Matem\'{a}tico, Universidad Complutense de Madrid, 24040 Madrid, Spain}

\pacs{03.67.Mn, 03.65.Ud, 75.10.Pq, 71.27.+a}


\begin{abstract}
Based on the theory of Matrix Product States, we give precise statements and complete analytical proofs of the following claim: a large fractionalization in the magnetization or the need of long-range interactions imply large entanglement in the state of a quantum spin chain. 
\end{abstract}

\maketitle

\section*{Introduction} Entanglement plays a central role in many-body quantum systems as it can be used to understand the structure of the quantum states that appear in nature. In systems governed by short-range interactions, low energy states possess very little entanglement. In contrast, states evolved after quenches display large amounts of entanglement. These different behaviors, which are supported by abundant numerical evidence, have been recently established on solid grounds in one spatial dimension \cite{quenches}. In particular, ground states of gapped (critical) Hamiltonians fulfill an area law, in which the entanglement entropy of any connected region is bounded by a constant (diverges at most like the logarithm of the number of spins in that region) \cite{entropyandarea, area-law}. These results immediately imply that the ground state of a spin chain can be well approximated by Matrix Product States \cite{PVWC07, FNW92}, and, thus, such family of states captures the physics in one dimension \cite{hastings-mps, hastings-mps-2, mps-vidal}.

Apart from the cases mentioned above, there exists practically no other physical situation where the existence of large or small amounts of entanglement can be rigorously established. In this paper, we identify two scenarios in one spatial dimension that can be connected to the presence of entanglement. We will restrict ourselves to systems described by MPS, and thus, our results do not apply to general situations. Nevertheless, given the fact that such family of states approximates well one dimensional systems, we conjecture that our results are true in more general settings. 

The first scenario corresponds to the presence of fractionalization, a striking phenomenon that arises whenever certain observables, which are expected to take integer expectation values, appear to be fractional--valued instead.
The most prominent example of such behavior is the celebrated fractional quantum Hall effect \cite{stormer, tsui, laughlin}. In recent years, this phenomenon has been extensively studied in many systems, including spin chains \cite{Oshikawa,hida}, where the magnetization per particle is fractionalized as a function of the external magnetic field. In the first part of this manuscript, we establish a lower bound for the entanglement entropy of any (connected and sufficiently large) region of a quantum spin chain in terms of the fractionalized magnetization.

The second scenario corresponds to a situation where the area law does not apply, namely when studying a spin chain with long-range interactions. Intuitively, one can expect that such interactions give rise to large amounts of entanglement since any specific region will be correlated to any other part of the chain. However, it is very subtle to transform this intuitive idea into a rigorous result. The main reason is that the ground state of Hamiltonian containing long-range interactions may coincide with (or be very similar to) the ground state of another Hamiltonian containing short-range ones and, therefore, fulfilling the area law. For instance, if we have an Ising model with decaying interactions (in the absence of a transverse magnetic field), the ground state will be still a product state, which, in turn, is also the ground state of the Ising model with nearest-neighbor couplings. Such state does not display any entanglement at all. Hence, we can only expect to have large amounts of entanglement whenever such examples do not exist; that is, whenever our state is (in some sense) not close to any other state corresponding to the ground state of a Hamiltonian with short-range interactions. In fact, we will prove a Theorem that formalizes this statement in the second part of this paper. 

In order to rigorously prove our statements, we will have to further develop the theory of MPS, extending previous results presented in Refs.\ \cite{PVWC07, SWPC09,SWPC09-2}, and deriving new ones. Some of them are very intuitive, although the rigorous proofs are somewhat complicated. We will present in the main text of this paper the main steps and their intuitive interpretation, and leave for the appendices the technical details.

\section*{Matrix Product States} This family of states describes a chain of $N$ spins $J$, with $d=2J+1$, and can be written as
\begin{equation}\label{eq:General-MPS}
|\psi\>=\sum_{i_1,\ldots, i_N=1}^d\tr{A_{i_1}[1]\cdots A_{i_N}[N]}|i_1\cdots i_N\>
\end{equation}
Here, $A_i [n]$ are $D\times D$ matrices associated to the spin in the $n$-th site of the chain. Our results, unless specifically mentioned, concern translationally invariant states, where $A_i [n]=A_i$ independently of the site $n$. We will call the corresponding state $|\psi_A\>$.

Let us recall some known properties of such states (see, for instance, Refs. \cite{FNW92,PVWC07}). MPS can be classified into injective and non-injective. An MPS is called {\it injective} if there exists an $L$ such that for regions of size $L$ or larger, different boundary conditions give rise to different states; that is, the map $\Gamma(X)=\sum_{i_1,\ldots, i_L}\tr{XA_{i_1}\cdots A_{i_L}}|i_1\cdots i_L\>$ is injective. This is known to be equivalent to the fact that, after a suitable transformation of the form $A_i\mapsto XA_iX^{-1}$, one obtains a {\it canonical form} fulfilling the following conditions (that we will always assume for injective MPS): (i) $\sum_{i}A_i A_i^\dagger=\1$, (ii) $\sum_i A_i^\dagger \Lambda_A A_i=\Lambda_A$ for a diagonal positive full rank matrix $\Lambda_A$, and (iii) the  cp map $\E_A$ defined as 
 \begin{equation}
 \E_A(X)=\sum_i A_i XA_i^\dagger 
 \end{equation}
has 1 as its unique non-degenerate eigenvalue of maximal modulus. This canonical form is unique in the following sense: if $A$ and $B$ are matrices giving rise to {\it different} canonical representations of the {\it same} MPS, then they must be related by a unitary $U$ according to $e^{i\theta}A_i=UB_iU^\dagger$. {\it Non-injective} MPS also possess a canonical form where the matrices are block-diagonal and the cp map associated to each block verifies conditions (i), (ii) and (iii) above, except for the existence of other eigenvalues of modulus 1.

\section*{Large fractional magnetization implies large entanglement} 
Fractional magnetization in a spin chain occurs whenever we have $U(1)$ symmetry (generated, in this case, by an operator $J_z$), and the expectation value of the generator $m=\langle J_z\rangle/N,$ the magnetization per particle, fulfills that $J-m=q/p$, where $p$ and $q$ are coprime. When we change some external parameter, such as a magnetic field, the value of $m$ generally changes in discrete steps, giving rise to typical plateaus in the magnetization. Our aim is to show that whenever a translationally and $U(1)$ invariant MPS displays this phenomenon, the entanglement entropy of any sufficiently large region is greater than $\log(p)$. That is, the value of $p$ imposes some lower bound on the entanglement present in the system. We will start out with a trivial example that will help us build an intuition about this statement, and then we will generalize this claim to arbitrary states.

Let us consider $J=1/2$, any two numbers $q,$ $p$ coprime, and construct a state of $N=np$ ($n$ integer) spins as follows. We consider first a $p$-particle state of the form $|a\>=|\uparrow\uparrow\cdots \uparrow\downarrow\downarrow\cdots \downarrow\>$, where $q$ is the number of spins down. Then, we take $n=N/p$ copies of such state, and build an equal superposition of the $p$ possible different translations of $|a\>^{\otimes n},$
\begin{equation}\label{eq:example}
|\psi\>=\frac{1}{\sqrt{p}}\sum_{m=0}^{p-1} {\tau}^m |a\>^{\otimes n}
\end{equation}
where $\tau$ is the translation operator. This state is translationally invariant, has $U(1)$ symmetry generated by $J_z=\sum s_n^z$, where $s_z$ is the single-spin operator $2s_z=\vert \uparrow\>\<\uparrow \vert - \vert \downarrow\>\<\downarrow \vert$, i.e.  $J_z |\psi\> = N (1/2-q/p)|\psi\>$, and, thus, exhibits fractional magnetization. Following the prescriptions of Oshikawa et al. \cite{Oshikawa}, this example contains ``periodic components'' in order to display such a phenomenon.
As one can see by simple inspection, if we take any region $A$ of size $L= kp,$ with $k \in \{1, \ldots, n \}$, the reduced density operator can be written as
\begin{equation}\label{eq:example_do}
\rho_L=\frac{1}{p}\sum_{m=0}^{p-1} |\varphi_m\>\<\varphi_m|
\end{equation}
where $\varphi_m$ are mutually orthogonal. Thus, the entropy of $\rho_L$ (and, consequently, the entanglement entropy) between the region $A$ and the rest is $\log(p).$ This toy model presenting such entropy is connected to the fact that, in this case, fractional magnetization arises because the ground state is a linear superposition of $p$-particle states which are both locally orthogonal (i.e. fully distinguishable) and related through a translation.

In what follows, we will consider the richer family of MPS in order to prove a related result. Note that the previous toy example is contained in the family of MPS just by considering the matrices
\begin{equation}
A_\downarrow=\sum_{i=1}^q |i\>\<i+1|\; ,\; A_\uparrow=\sum_{i=q+1}^{p}|i\>\<i+1|
\end{equation}
However, general cases of MPS possess several difficulties. In particular: (i) finding a characterization of all MPS displaying fractional magnetization; (ii) the fact that, even if an MPS is a superposition of states related by a translation, nothing ensures that the reduced states will be sums of few pure and mutually orthogonal states. In any case, we are able to prove the following Theorem:

\begin{thm}\label{thm:fractional}
Let $|\psi \rangle$ be a translational and $U(1)$ invariant MPS of spin $J$, with magnetization per particle $m$ and verifying $J-m = \frac{q}{p}$ ($p$ and $q$ coprime). Then there exists a constant $\gamma\in \mathbb{N}$ such that the entropy of the reduced density matrix of any region of size $L=k\gamma p\; (\forall k \in \mathbb{N})$ verifies $S(\rho_L) \geq \log (p),$ up to a exponentially small correction in $N-L$ and in $k.$
\end{thm}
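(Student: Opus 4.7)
The plan is to exploit the interplay between translational invariance, $U(1)$-symmetry, and the rigidity of the canonical form of MPS in order to force $|\psi\rangle$ to decompose as a superposition of $p$ locally orthogonal states related by one-site translation. Once this structure is made explicit, the analysis of the toy example in (\ref{eq:example})--(\ref{eq:example_do}) essentially delivers the entropy bound.

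First I would establish that the injective case cannot exhibit fractional magnetization with $p>1$. Place $A$ in canonical form and translate $U(1)$-invariance into a covariance relation: working in the $s^z$-eigenbasis, there must exist an (a priori projective) representation $V(\theta)$ on the virtual space and a phase function $\phi(\theta)$ with $\sum_j (e^{i\theta s^z})_{ij}A_j = e^{i\phi(\theta)}V(\theta)A_i V(\theta)^{-1}$. Closing the trace over $N$ sites gives $e^{iN\phi(\theta)}=e^{i\theta Nm}$, so $\phi(\theta)=\theta m$ modulo $2\pi/N$. Evaluating at $\theta=2\pi$ and using $e^{i2\pi s^z}= e^{-i2\pi J}\1$ yields, after absorbing a 2-cocycle and reducing $V$ to a genuine representation of $\mathbb{R}$, the asymptotic constraint $m-J\in\mathbb{Z}$. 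This is incompatible with $J-m=q/p$ for $p>1$, so $|\psi_A\rangle$ must be genuinely non-injective.

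Second I would extract the refined block structure. Non-injectivity means the transfer operator has additional eigenvalues of modulus one whose phases are $s$-th roots of unity for some period $s$. Concretely, this is equivalent to the matrices admitting the refined form $A_i=\sum_{\beta=0}^{s-1}|\beta\rangle\langle\beta+1|\otimes B_i^{(\beta)}$ with each $B^{(\beta)}$ injective. Applying the covariance relation of the previous paragraph blockwise forces the virtual $U(1)$ charges on consecutive sub-blocks to differ by $q/p\pmod 1$, and cyclic consistency after $s$ steps forces $p\mid s$. Setting $\gamma=s/p$ ensures that for region sizes $L=k\gamma p$ the global trace picks up complete cycles of the sub-block sequence, and the state decomposes as $|\psi_A\rangle\propto\sum_{\beta=0}^{p-1}\tau^\beta|\phi\rangle$, in direct parallel with (\ref{eq:example}).

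Finally, for $k$ large enough each $|\phi_\beta\rangle=\tau^\beta|\phi\rangle$ restricted to the region is described by an injective MPS built from a different cyclic sequence of sub-blocks. Uniqueness of the canonical form, together with the spectral gap of each injective transfer matrix, implies that the reduced density matrices $\rho_\beta$ have mutually orthogonal supports up to corrections exponentially small in $k$ (injectivity tail) and in $N-L$ (boundary closure of the global trace). Hence $\rho_L=\frac{1}{p}\sum_{\beta=0}^{p-1}\rho_\beta$ up to such corrections, giving $S(\rho_L)\geq\log p$ with the stated error. The main obstacle will be the second step: rigorously refining the non-injective canonical form into the covariantly-permuted sub-block representation, matching the virtual $U(1)$ charges to the physical fractionalization $q/p$, and ensuring that the resulting orbit length is \emph{exactly} a multiple of $p$ rather than of some smaller integer. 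Controlling the projective character of $V(\theta)$ at the block level, and proving that distinct sub-blocks give rise to locally orthogonal injective MPS (without which the $\log p$ bound would degrade), is where the bulk of the technical work will lie.
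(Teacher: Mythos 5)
Your overall strategy is the same as the paper's: show that the $U(1)$ symmetry forces the periodic structure of the canonical form to have period a multiple of $p$, decompose the state into translates of injective MPS, and use orthogonality of distinct injective MPS to get $\rho_L$ as an (almost) orthogonal mixture, whence $S(\rho_L)\geq\log p$. The one methodological difference is in how you rule out $p>1$ for small periods: you track the virtual $U(1)$ charges through the covariance relation $\sum_j (e^{i\theta s^z})_{ij}A_j=e^{i\phi(\theta)}V(\theta)A_iV(\theta)^{-1}$ and close the loop at $\theta=2\pi$, whereas the paper proves an elementary Lieb--Schultz--Mattis-type counting lemma (its Lemma 6) that uses no MPS structure at all: expanding $u_g^{\otimes N}|\psi\>=e^{igNm}|\psi\>$ in the canonical basis for two system sizes $sN$ and $s(N+1)$ forces $s(J-m)\in\mathbb{Z}$, and this is then applied to each injective block (which inherits the symmetry and magnetization by the results of SWPC09). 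The paper's route is more robust: it sidesteps the projective-representation and $\phi(\theta)\bmod 2\pi/N$ bookkeeping that you correctly identify as delicate, and it is valid in any spatial dimension.

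There is, however, a genuine gap in your second step. You assert that non-injectivity is \emph{equivalent} to the single cyclic form $A_i=\sum_{\beta=0}^{s-1}|\beta\>\<\beta+1|\otimes B_i^{(\beta)}$, so that $|\psi_A\>\propto\sum_{\beta=0}^{p-1}\tau^\beta|\phi\>$ is the orbit of a single state under translation. This is false in general: the canonical form of a non-injective MPS is a direct sum of several irreducible components, each with its own period, and components from different blocks need not be related by any translation (already a superposition of two inequivalent injective MPS is a counterexample to your claimed equivalence). Your argument as written only covers the single-block case. The fix is the one the paper uses: show (via the LSM lemma plus the symmetry characterization) that \emph{every} block has period a multiple of $p$, take $\gamma p$ to accommodate all of them, write $\rho_L=\sum_i\mu_i\rho_i$ as a convex combination over inequivalent blocks (approximately orthogonal by the ``different injective MPS are orthogonal'' lemma), prove $S(\rho_i)\geq\log p$ for each block separately, and conclude by concavity $S(\rho_L)\geq\sum_i\mu_i S(\rho_i)\geq\min_i S(\rho_i)$. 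Relatedly, your last inference (``mutually orthogonal supports up to exponentially small corrections, hence $S\geq\log p$'') should be made quantitative, e.g.\ via $S(\rho)\geq-\log\tr{\rho^2}$ together with a bound on the cross terms $\tr{\rho_\beta\rho_{\beta'}}$, since approximate orthogonality of supports does not by itself control the entropy.
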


In order to prove this, we proceed as follows: 

\begin{proof} Let $|\psi \rangle$ be an MPS, which is translational and $U(1)$ invariant. We also impose that this MPS has spin $J$ and magnetization per particle $m,$ verifying $J-m = \frac{q}{p}$ ($p$ and $q$ coprime) and consider its canonical form. If it has a single block, due to Lemma 8 in Appendix A, it must be $\gamma p$-periodic, where $\gamma \in \mathbb{N}$. This means that all the eigenvalues of magnitude one corresponding to the cp-map $\E_A$  are the $\gamma p$-roots of unit. Consequently, if we block $\gamma p$ spins, then we can write the new matrices $A_i$ as block-diagonal, with each block being injective and different (see Lemma 5 in Appendix A). We have now that the state $|\psi\>$ can be written as linear combination (with equal coefficients) of $\gamma p$ different injective MPS, each of them being a translation of each other. In Lemma 3 of Appendix A, we show that different injective MPS are orthogonal in the thermodynamic limit. Let $L = k \gamma p$ ($k \in \mathbb{N}$), using Jensen's inequality we have that $S(\rho_L) \geq - \log(\text{tr} (\rho_L^2) )$ which, by Lemma 4 in Appendix A implies $S(\rho_L) \geq \log(\gamma p) \geq \log (p),$ up to an exponentially small correction in $N-L$ and $k,$ as in the example proposed above.  

If the MPS has many blocks in its canonical form, we will show that one can treat each of these blocks as in the single block case, obtaining an extension of the last result. Lemma 8 gives us $\gamma \in \mathbb{N}$ such that all the blocks of the canonical form of $|\psi\>$ have period $\gamma p$. Let $L=k\gamma p,$ where $k\in \mathbb{N}.$ We observe that the reduced density operator of a region comprising $L$ sites verifies 
 \begin{equation}
 \label{eq:decomp}
 \rho_L=\oplus_{i=1}^n \mu_i \rho_i,
 \end{equation}
up to a correction exponentially small in $N-L$ and $k$ (see Lemma 4 in Appendix A).  The $\rho_i$'s are the reduced density matrices corresponding to single blocks, where repeated blocks are simply reflected in the $\mu_i$'s. Using the single block case, we can ensure again that $S(\rho_i) \geq \log (p)$ for all $i.$ It is clear, from (\ref{eq:decomp}) and the subaditivity of the entropy, that $S(\rho_L) \geq \log (p)$ up to another exponentially small correction in $N-L$ and $k,$ yielding the desired result.
\end{proof}

To prove the crucial Lemma 8, it will be enough to consider the characterization of symmetries for injective MPS \cite{SWPC09}, as well as an extension of the Lieb-Schutz-Mattis theorem for $U(1)$ symmetry which is explained in Lemma 6. The first \cite{SWPC09} will allow us to assert that all injective MPS corresponding to blocks must have the same symmetry and the same magnetization $m$. The later, that all these blocks should have a period multiple of $p.$ Moreover, Lemma 5 in Appendix A ensures that states corresponding to different blocks are necessarily different.

\section*{Large interaction length implies large entanglement} 
Now, we turn to the other situation where one can prove the appearance of entanglement. For that, let us consider again a translationally invariant MPS, $|\psi_A\>$, which is not the ground state of any short-range (gapped and frustration-free) Hamiltonian. Furthermore, let us assume that it is also far away (as specified below) from any other state with this property for any given interaction length. We will show that, as a consequence, its entanglement entropy will be large and, indeed, will scale with the range of the interaction.

In fact, if we denote by $\rho_A^L$ the reduced density operator of $|\psi\>$ for a (connected) region containing $L$ spins, we can prove the following Theorem

\begin{thm}\label{thm:interaction}
Let $|\psi_A\>$ be an MPS such that every state \,$|\tilde{\psi}\>$ which is the unique ground state of a gapped frustration-free Hamiltonian with interaction length $L$ verifies $\|\rho_A^L - \tilde{\rho}^L \|_1 \geq \epsilon$. Then, for sufficiently large regions $R,$ we have that the $\alpha$-Renyi entropy $S_{\alpha}(\rho_A^R) \ge a L + b \log \epsilon+c$, for $\alpha\leq\frac{1}{6}$ and where $a,b,c$ are constants depending only on the local physical dimension $d$ of  $|\psi_A\>.$
\end{thm}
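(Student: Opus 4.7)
The plan is to prove the contrapositive of Theorem~\ref{thm:interaction}. Assuming $S_\alpha(\rho_A^R) < aL + b\log\epsilon + c$ with $\alpha \leq 1/6$, I will construct a state $|\tilde{\psi}\>$ that is the unique ground state of some gapped, frustration-free Hamiltonian with interaction length $L$ and satisfies $\|\rho_A^L - \tilde{\rho}^L\|_1 < \epsilon$. The argument balances two opposing facts: (i)~a small $\alpha$-Renyi entropy with $\alpha < 1$ yields an MPS approximation of small bond dimension $D$, and (ii)~an injective MPS of bond dimension $D$ is itself, by the standard canonical-form machinery, the unique ground state of a gapped, frustration-free parent Hamiltonian whose interaction length is $O(\log_d D)$.

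For step (i) I would diagonalize $\rho_A^R$ and bound the tail of its spectrum through its Renyi norm. Since $\lambda_i \geq t$ implies $\lambda_i^\alpha \geq t^\alpha$, the number of eigenvalues above a threshold $t$ is at most $t^{-\alpha}\sum_i\lambda_i^\alpha = t^{-\alpha} e^{(1-\alpha)S_\alpha}$, while the mass of eigenvalues below $t$ is at most $t^{1-\alpha}\sum_i\lambda_i^\alpha$. Choosing $t$ to balance these yields a projector onto a subspace of rank $D$ with $\log D \lesssim S_\alpha/(1-\alpha) + (\alpha/(1-\alpha))\log(1/\epsilon)$, and the corresponding truncation of $\rho_A^L$ is within trace norm $\epsilon/2$ of the original. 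Lifting this back to an MPS requires truncating the Schmidt decomposition of $|\psi_A\>$ simultaneously at each of the $O(L)$ cuts inside the region; the truncation errors accumulate additively across these cuts and must remain below $\epsilon$, which is what forces $\alpha$ to be small enough---the precise propagation estimate will ultimately pin the constant to $\alpha\leq 1/6$.

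For step (ii) I would invoke the canonical-form machinery recalled in the Matrix Product States section. After blocking roughly $2\lceil\log_d D\rceil$ physical sites, any bond-$D$ MPS can be brought to a block-diagonal canonical form in which each block is injective, and the parent Hamiltonian---defined as a sum of local projectors onto the orthogonal complement of the injective range---is frustration-free, gapped, and has $|\tilde{\psi}\>$ as its unique ground state \cite{PVWC07, FNW92}. Demanding that the blocking length be at most $L$ translates into the constraint $\log D \lesssim (L/2)\log d$. Combined with the bound from (i), this gives the announced inequality $S_\alpha \geq aL + b\log\epsilon + c$, with $a$ proportional to $(1-\alpha)\log d$ and $b$ proportional to $\alpha$.

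The principal obstacle is step (i): translating a spectral truncation of $\rho_A^R$ into a translationally invariant bond-$D$ MPS whose $L$-site reduced density is close to $\rho_A^L$ requires propagating and compounding truncation errors across many cuts, and it is this propagation that produces the quantitative form of the bound and the restriction $\alpha \leq 1/6$. A secondary subtlety is that the truncated MPS need not be injective, so one must decompose it into injective blocks and argue that each individually admits a parent Hamiltonian of interaction length at most $L$, by an argument analogous to the multi-block reduction used in the proof of Theorem~\ref{thm:fractional}; tracking these two steps jointly also fixes the additive constant $c$.
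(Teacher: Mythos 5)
Your overall strategy --- prove the contrapositive, use the small Renyi entropy to truncate to bond dimension $\tilde{D}\lesssim d^{L/2}$, and then exhibit the truncated state as the unique ground state of a gapped frustration-free parent Hamiltonian of interaction length $L$ --- is exactly the paper's. But step (ii) rests on a claim that is false in general: you assert that \emph{any} bond-$D$ MPS reaches (block-)injectivity after blocking roughly $2\lceil\log_d D\rceil$ sites, so that its parent Hamiltonian has interaction length $O(\log_d D)$. The inequality $d^{L_0}\ge D^2$ is only a \emph{necessary} condition for injectivity at length $L_0$; the actual injectivity length of an arbitrary injective MPS is only known to be bounded polynomially in $D$ (quantum Wielandt-type bounds), not logarithmically. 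This is precisely the obstruction the paper spends Appendix C overcoming: it shows that all MPS \emph{except a measure-zero set} reach injectivity in $\approx 2\log_d D$ sites, and even this generic statement requires a nontrivial argument (reduction via algebraic geometry to exhibiting a single example, which is then built from quantum expanders, with a worse constant $k=8$). Without this genericity argument --- and without explaining why the specific truncated operators $PA_iP$, which are not generic and do not even form a trace-preserving channel, can be embedded into a state to which the generic result applies --- your parent Hamiltonian may have interaction length far exceeding $L$, and the contradiction never materializes. Your ``secondary subtlety'' about decomposing a non-injective truncation into blocks does not address this: the problem is the injectivity \emph{length}, not the block structure.

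Two further points where your sketch diverges from what is actually needed. First, the paper does not truncate the Schmidt decomposition cut by cut with additive error accumulation; it projects the Kraus operators and the fixed point ($\tilde{A}_i=PA_iP$, $\tilde{\Lambda}=P\Lambda P$) and proves in Appendix B the bound $\|\rho_A^L-\rho_{\tilde{A}}^L\|_1\le 2\sqrt{2}\,\tilde{D}\sqrt{L}\,\delta^{1/4}+(2L+3)\delta$ with $\delta=\tr{\Lambda-\tilde{\Lambda}}$; the $\delta^{1/4}$ and the $\tilde{D}\sim d^{L/2}$ prefactor, fed through the tail estimate $\log\delta\le\frac{1-\alpha}{\alpha}(S_\alpha-\log\frac{\tilde{D}}{1-\alpha})$, are what actually produce the threshold $\alpha\le 1/6$; your additive-accumulation heuristic would not reproduce these exponents. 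Second, one must still verify (the paper's Appendix D) that there exists a legitimate global state, built from the truncated matrices inside the region and auxiliary generic matrices outside, whose normalized $L$-site marginal really is $\rho_{\tilde{A}}^L$ up to exponentially small corrections; this construction is also what supplies the genericity needed for the injectivity argument. As written, your proposal identifies the right skeleton but is missing the two technical pillars (the quantitative truncation bound and the generic fast-injectivity lemma) that make it a proof.
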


This claim can be proved by contradiction. We will suppose that for every connected region and for $\alpha \leq \frac{1}{6}$ we have that the $\alpha$-Renyi entropy is upper bounded by an expression of the form $a L + b \log \epsilon+c$, for $\alpha\leq\frac{1}{6},$ where $a,b,c$ are constants depending on the physical dimension of  $|\psi_A\>.$ It will be enough to prove that this implies the existence of a state, the unique ground state of a gapped frustration-free Hamiltonian with interaction length $L,$ such that $\|\rho_A^L - \tilde{\rho}^L \|_1 < \epsilon.$

The hypothesis on $S_{\alpha}$ being small implies that we can find another MPS with a sufficiently small bond dimension, $\tilde{D}$ (in particular, $\tilde{D}\le d^{(L-1)/2}$) that is close enough to the original one. In order to do this, we will rely on \cite[Lemma 2]{VC} and on a new bound for reducing the bond dimension of an MPS. More precisely, this bound will be of the form $$\| \rho_A^L - \rho_{\tilde{A}}^L\|_1 \le 2 \sqrt{2}d^{L/2} \sqrt{L}\delta^{1/4}+(2L+3)\delta,$$ where $\rho_{\tilde{A}}^L$ will be the reduced density matrix which can be constructed from $\rho_A^L$ by substituting the Kraus operators $A_i$ by $PA_iP$ and $\Lambda = P \Lambda P,$  where $P = \sum_{i_1}^{\tilde{D}} \vert i \rangle \langle i \vert.$ It will be explained in further detail in Appendix B. 

Now, arbitrarily close to the MPS associated to the reduced density matrix $\rho^L_{\tilde{A}},$ there exists another which is the unique ground state of a Hamiltonian with gap and interaction length $L.$ Taking into account that the interaction length is closely related to the bond dimension at which the MPS reaches injectivity, this will be deduced from proving that all MPS (except for a set with measure zero) reach injectivity fast enough.  Standard Algebraic Geometry, as explained in Lemma 11 of Appendix C and \cite{ag, ag2}, reduces this problem to finding the existence of a single MPS displaying this property. The existence of such an example can be obtained using quantum expanders, as explained also in Appendix C.  

A more detailed proof can be given as follows:

\begin{proof}
Let us call $\lambda_i$ the ordered eigenvalues of $\rho_A^R,$ which can be taken as close as wanted to those of  $\Lambda\otimes \Lambda$ by enlarging region $R$  \cite[Lemma 2]{VC}. In this case, it is not difficult to see that, if we call $\mu_i$ the ordered elements of $\Lambda$, then $\sum_{i=\tilde{D}+1}^\infty \mu_i\le\sum_{i=\tilde{D}+1}^\infty \lambda_i= : \delta$. 

Suppose that, for $\alpha=\frac{1}{6}$ and for all $R,$ we can upper-bound the $\alpha$-Renyi entropy by
\begin{equation}\label{entropy}S_{\alpha}(\rho_A^R) \leq \frac{4}{5}\log\epsilon+ \frac{1}{10}(L \log d-\log L)-\log\frac{d}{4}\end{equation}

In Appendix C, we show that we can always construct a state, that we shall call $|\tilde{\psi}\>$, of the form

\begin{equation}\label{newMPS}
\vert \tilde{\psi} \rangle = \sum_{\begin{subarray} {l} i_{1}, \ldots, i_L \\  i_{L+1}, \ldots, i_N\end{subarray}} \text{tr} \left( \tilde{A}_{i_1} \cdots \tilde{A}_{i_L} B_{i_{L+1}}  C_{i_{L+2}} \cdots  C_{i_N} \right) \vert i_1 \cdots i_N \rangle
\end{equation} where $\tilde{A}_i, B_j, C_k \in \mathcal{M}_{\tilde{D} \times \tilde{D}},$  $\tilde{A}_i = PA_iP$ (being $A_i$ the Kraus operators defining the original MPS), with bond dimension $\tilde{D}\le d^{(L-1)/2}$ and such that the fixed point for the associated channel is $\tilde{\Lambda} = P \Lambda P,$ where we are considering that $P = \sum_i^{\tilde{D}} \vert i \rangle \langle i \vert.$ 

In Appendix C, we also prove that all states of this form (except a set of measure zero) reach injectivity in $L-1$ sites. Therefore, the one we have constructed in (\ref{newMPS}) is the unique ground state of a frustration-free Hamiltonian with interaction length $L$ \cite{PVWC07, FNW92}. Using a straighforward adaptation of \cite[Section 6]{FNW92}, this Hamiltonian is also gapped. Even though this state is not exactly translational invariant, it verifies that its normalized reduced density matrix for particles $1\ldots L$ is of the form

\begin{eqnarray*}
\rho_{\tilde{A}}^L &=&\sum_{\begin{subarray}{l} i_1, \ldots, i_L \\ j_1, \ldots, j_L \end{subarray}} \left( \sum_{\alpha, \beta} \langle \alpha \vert \left[ \tilde{A}_{i_1} \ldots \tilde{A}_{i_L} \tilde{\Lambda} \tilde{A}^{\dagger}_{j_L} \ldots \tilde{A}^{\dagger}_{j_1} \right] \vert \beta \rangle \right)
\end{eqnarray*} up to a exponentially small correction (see Appendix D).

This will allow us to use a bound, which is proved in Appendix B, which states that $$\| \rho_A^L - \rho_{\tilde{A}}^L\|_1 \le 2 \sqrt{2}d^{L/2} \sqrt{L}\delta^{1/4}+(2L+3)\delta$$$$\le 4 \sqrt{2}d^{L/2} \sqrt{L}\delta^{1/4} =:\epsilon',$$ since the first term in the sum is clearly larger than the second. It only remains to show that $\epsilon'\le \epsilon$, or equivalently, that $\delta\le \frac{\epsilon^4}{2^{10}d^{2L}\sqrt{L}}$. Since we have taken $R$ large enough, then up to a exponentially small correction in $R,$ we can state that $$\log(\delta)\le \frac{1-\alpha}{\alpha}\left(S_\alpha(\rho_A^R)-\log\frac{\tilde{D}}{1-\alpha}\right)\; .$$ 

Using this and the fact that that $\tilde{D}\ge d^{(L-2)/2}$, it is enough to prove
\begin{align*}
S_\alpha (\rho_A^R)&\le \frac{4\alpha}{1-\alpha} \log \epsilon +\frac{\log d}{2}\left(1-\frac{4\alpha}{1-\alpha}\right)L \\&-\frac{\alpha}{(1-\alpha)}(10+\frac{1}{2}\log L) -\log(1-\alpha)-\log d\\ &=
\frac{4}{5}\log\epsilon + \frac{1}{10}(L\log d-\log L)-\log\frac{5}{6}-\log \frac{d}{4}
\end{align*}
where, in the last step, we have set $\alpha=\frac{1}{6}.$ This is given by hypothesis in Eq. (\ref{entropy})\footnote{Note that the $\alpha$-Renyi is monotonically increasing in $\alpha$ }. Therefore, there exists a state $|\tilde{\psi}\>$, which is the unique ground state of a gapped frustration-free Hamiltonian with interaction length $L,$ such that $\|\rho_A^L - \tilde{\rho}^L \|_1 < \epsilon,$ as we wanted to prove.
\end{proof}

\section*{Conclusion} In this work, we have shown how MPS are powerful enough to provide formal proofs of certain believed statements on strongly correlated spin systems that were lacking a mathematical treatment. In particular, we have stated and proved that, for the state of a quantum spin chain, either a large fractionalization in the magnetization or the impossibility of being well approximated by the ground state of a local Hamiltonian demands large entanglement. Moreover, since MPS seem to be the right representation for the low energy sector of 1D systems, one may postulate the results being true in full generality.

\section*{Acknowledgments}

We acknowledge discussions with R. Or\'us, M.C. Banuls, G. Sierra and specially  J.J. Garc\'ia-Ripoll.  DPG and IC acknowledge the support and hospitality of Perimeter Institute, where
some part of this work was carried out. This work was supported by the European projects QUEVADIS, CHIST-ERA CQC and Spanish grants QUITEMAD and MTM2011-26912.

\appendix
\section{Technical lemmas for the proof of Theorem 1}
Our first aim is to state and prove a couple of lemmas formalizing the claim: ``for injective MPS, different means orthogonal''.

\begin{lem}\label{lem:orthogonal}
Given two injective MPS, $|\psi_A\>$ and $|\psi_B\>,$ then $\||\psi_A\>\|,\||\psi_B\>\|=1$ up to an exponentially (in $N$) small correction. Moreover, either both are equal for all $N$, or $\text{tr}_{N-L} \vert \psi_A \> \< \psi_B | =0$ up to an exponentially (in $N-L$) small correction. In particular, $|\<\psi_A|\psi_B\>|=0$ up to an exponentially (in $N$) small correction.
\end{lem}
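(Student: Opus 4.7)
The plan is to reduce every claim to spectral properties of the (mixed) transfer operators. Define $\EE_A(X)=\sum_i A_i X A_i^\dagger$ and $\EE_B$ analogously, and set $T_A=\sum_i A_i\otimes \overline{A}_i$, $T_{AB}=\sum_i A_i\otimes \overline{B}_i$. The canonical form conditions (i)-(iii) translate, via Perron-Frobenius for irreducible CP maps, into the statement that $T_A$ has $1$ as its unique, non-degenerate eigenvalue of maximal modulus, with right eigenvector $\1$ (by (i)) and left eigenvector $\Lambda_A$ (by (ii)), and all remaining eigenvalues of modulus at most some $r_A<1$; analogously for $T_B$.

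The norm claim is then immediate: the trace structure in (\ref{eq:General-MPS}) gives $\<\psi_A|\psi_A\>=\tr{T_A^N}$, which by the spectral decomposition equals $\tr{\Lambda_A}+O(r_A^N)=1+O(r_A^N)$ after the standard normalization $\tr{\Lambda_A}=1$, and likewise for $\psi_B$. For the partial overlap I would expand $\text{tr}_{N-L}|\psi_A\>\<\psi_B|$ in the product basis; each matrix element is then an inner product involving $T_{AB}^{N-L}$ sandwiched between explicit vectors built from $A_{i_1}\cdots A_{i_L}$ and $B_{j_1}\cdots B_{j_L}$. A Cauchy-Schwarz-type estimate built from the canonical-form fixed-point equations gives the spectral radius bound $\rho(T_{AB})\le 1$; when this inequality is strict, $T_{AB}^{N-L}$ decays exponentially in $N-L$ and the partial-overlap claim follows at once.

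The main obstacle is the boundary case $\rho(T_{AB})=1$, in which I have to show that the two states must in fact be equal for all $N$. Take a unit-modulus eigenvector $X$ with $\sum_i A_i X B_i^\dagger = e^{i\varphi} X$. Combining this relation with the fixed-point equations for $\EE_A$ and $\EE_B$, and exploiting the saturation of Kadison-Schwarz, one forces $XX^\dagger\propto \1$ and $X^\dagger X\propto \Lambda_B$, so that $X$ is proportional to a unitary $U$. The intertwining relation then becomes $e^{i\varphi} A_i = U B_i U^\dagger$, and by the uniqueness clause of the canonical form recalled in the main text this yields $|\psi_A\>=|\psi_B\>$ for every $N$. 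Promoting the abstract eigenvector to a genuine unitary intertwiner is the technically delicate step, and it crucially uses injectivity, through the irreducibility of both $\EE_A$ and $\EE_B$, to apply Perron uniqueness to the derived fixed-point equations.

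Finally, the statement on $|\<\psi_A|\psi_B\>|$ is the $L=0$ special case of the partial-overlap estimate: $\<\psi_A|\psi_B\>=\tr{T_{AB}^N}$ is exponentially small in $N$ whenever $\rho(T_{AB})<1$, i.e.\ whenever $|\psi_A\>\ne|\psi_B\>$.
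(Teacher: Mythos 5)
Your proposal follows the paper's own proof essentially step for step: both reduce all three claims to the spectrum of the mixed transfer operator $\E_{A,B}(X)=\sum_i A_i X B_i^\dagger$ (equivalently $\mathcal{E}_{A,B}=\sum_i A_i\otimes\bar B_i$), both obtain $|\lambda|\le 1$ for its eigenvalues from a Cauchy--Schwarz estimate built on condition (i) for $A$ and condition (ii) for $B$, and both handle the boundary case by showing that saturation produces a unitary intertwiner $e^{i\theta}A_i=UB_iU^\dagger$, after which the uniqueness of the canonical form gives $|\psi_A\>=|\psi_B\>$ for all $N$. The norm statement via $\tr{\mathcal{E}_A^N}$ and the partial-trace statement via $\mathcal{E}_{A,B}^{N-L}$ sandwiched between boundary vectors are exactly the paper's argument.

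One intermediate claim in your boundary-case analysis is wrong as stated: you assert that saturation forces $XX^\dagger\propto\1$ \emph{and} $X^\dagger X\propto\Lambda_B$. These two statements are mutually inconsistent unless $\Lambda_B\propto\1$, because $XX^\dagger$ and $X^\dagger X$ always share their nonzero spectrum; so as written the step does not deliver a unitary. The correct route (the one the paper takes) is: equality in Cauchy--Schwarz gives $\alpha\Lambda_B^{1/2}X^\dagger A_i=\Lambda_B^{1/2}B_iX^\dagger$ with $|\alpha|=1$; invertibility of $\Lambda_B$ turns this into $\alpha X^\dagger A_i=B_iX^\dagger$, whence $\sum_i B_i X^\dagger X B_i^\dagger=X^\dagger X$, and condition (iii) for $B$ (the fixed point $\1$ of $\E_B$ is unique) forces $X^\dagger X\propto\1$, i.e.\ $X$ is proportional to a unitary. $\Lambda_B$ plays no role as a candidate for $X^\dagger X$. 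With that local correction your sketch coincides with the published proof.
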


\begin{proof}
It is easy to see that  $\<\psi_B|\psi_A\>=\tr{\mathcal{E}_{A,B}^N}$, where $\mathcal{E}_{A,B}=\sum_{i}{A}_i \otimes \bar{B}_i$. Moreover, it is clear that the eigenvalues of $\mathcal{E}_{A,B}$ are the same as those of the map $\E_{A,B}(X)=\sum_iA_i XB_i^\dagger$, which gives $\||\psi_A\>\|,\||\psi_B\>\|=1$ up to a exponentially small correction. To finish, it is enough to see that all eigenvalues $\lambda$ of $\E_{A,B}$ verify that $|\lambda|<1$. We will use the conditions verified by the canonical form of an injective MPS, that is: (i) $\sum_{i}A_i A_i^\dagger=\1$, (ii) $\sum_i A_i^\dagger \Lambda_A A_i=\Lambda_A$ for a diagonal positive full rank matrix $\Lambda_A,$ and (iii) the  cp map $\E_A$ defined as 
 \begin{equation}
 \E_A(X)=\sum_i A_i XA_i^\dagger 
 \end{equation}
has 1 as its unique non-degenerate eigenvalue of maximal modulus.

Let us take $X$ such that $\sum_{i}A_i XB_i^\dagger=\lambda X$, using (i) for $A$ and (ii) for $B$  we get
\bea
 && |\lambda| |\tr{X\Lambda_B X^\dagger}|=\left|\sum_i \tr{ A_iX B_i^\dagger \Lambda_B X^\dagger}\right| \nonumber\\
 &\le& \left[\sum_i \tr{XB_i^{\dagger}\Lambda_B B_i X^\dagger}\right]^{1/2}
 \left[\sum_i \tr{A_i^{\dagger}X\Lambda_B X^\dagger A_i}\right]^{1/2} \nonumber\\
 &=&|\tr{X\Lambda_B X^\dagger}|,
 \label{ineq}
 \eea
 where we have used the Cauchy-Schwarz inequality and $\tr{X\Lambda_B X^\dagger}>0$. So, if $|\lambda|\ge 1,$ we must have an equality and, therefore, $\alpha\Lambda_B^{1/2} X^\dagger A_i =\Lambda_B^{1/2}
B_i X^\dagger$.  Multiplying by the adjoint expression, summing in $i$, taking traces and using (i) and (ii) again we get that $|\alpha|=1$ and, hence, $\alpha=e^{i\theta}$. Finally, since $\Lambda_B$ is invertible, we get $\sum_iB_iX^\dagger XB_i^\dagger=X^\dagger X$, which, by (iii), leads to $X^\dagger X=\1$ and implies that $e^{i\theta}A_i=XB_iX^\dagger$. This means that $|\psi_A\>$ and $|\psi_B\>$ are equal up to a global phase, for all $N$.
\end{proof}

A similar proof gives the following

\begin{lem}\label{lem:density}
Given an MPS of the form $|\psi\>=\sum_{r=1}^n \lambda_r |\psi_r\>$ such that the $|\psi_r\>$ are different injective MPS, then $\tr{\rho^L_r \rho^L_s} \propto \delta_{rs} + O(e^{-L}) + O(e^{-(N-L)}),$ being $\rho^L_r$ the reduced density matrix for $L$ particles associated to $|\psi_r\>.$ 
\end{lem}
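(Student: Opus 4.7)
The plan is to adapt the transfer-matrix analysis of Lemma~\ref{lem:orthogonal} from inner products to the Hilbert--Schmidt pairing of reduced density matrices. Denote the canonical MPS matrices of $|\psi_r\>$ by $A^{(r)}_i$. Expanding $\tr{\rho^L_r\rho^L_s}$ using these representations gives a single tensor-network contraction on four copies of the auxiliary space. After grouping the four tensors sitting on each physical site, the $L$ sites inside the region contribute $L$ iterates of the mixed transfer operators $\E_{A^{(r)},A^{(s)}}(X)=\sum_i A^{(r)}_i X (A^{(s)}_i)^\dagger$ and $\E_{A^{(s)},A^{(r)}}$, while the $N-L$ sites outside contribute $N-L$ iterates of the diagonal transfer operators $\E_{A^{(r)}}$ and $\E_{A^{(s)}}$; up to normalization, the whole expression closes via the cyclic trace to
\[
\tr{\rho^L_r\rho^L_s} \;=\; \frac{1}{Z_rZ_s}\,W\!\Bigl(\bigl(\E_{A^{(r)},A^{(s)}}\otimes\E_{A^{(s)},A^{(r)}}\bigr)^{L};\; \E_{A^{(r)}}^{N-L}\otimes\E_{A^{(s)}}^{N-L}\Bigr),
\]
where $W$ denotes the relevant tensor contraction and $Z_r=\||\psi_r\>\|^2=1+O(e^{-N})$ by Lemma~\ref{lem:orthogonal}.

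The second step is to evaluate the diagonal powers using the spectral data of the canonical form. By injectivity and property (iii), both $\E_{A^{(r)}}$ and $\E_{A^{(s)}}$ have a unique eigenvalue of maximal modulus equal to one (with right fixed point $\1$ and left fixed points $\Lambda_r,\Lambda_s$), and the remaining spectrum sits strictly inside the unit disk. Therefore $\E_{A^{(r)}}^{N-L}$ and $\E_{A^{(s)}}^{N-L}$ converge to rank-one projectors onto these fixed points, up to an $O(e^{-(N-L)})$ correction. Substituting these projections reduces the whole contraction to $L$ applications of the mixed transfer operator sandwiched between explicit fixed-point vectors.

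The last step is the spectral bound for the mixed transfer operator. When $r=s$ the map coincides with $\E_{A^{(r)}}$ and the computation collapses, to leading order, to a positive constant matching $\tr{(\rho^L_r)^2}$, producing the diagonal $r=s$ contribution. When $r\neq s$ the two injective MPS are different by hypothesis, and the Cauchy--Schwarz chain in the proof of Lemma~\ref{lem:orthogonal} shows that every eigenvalue of $\E_{A^{(r)},A^{(s)}}$ is strictly less than one in modulus; equality would force $|\psi_r\>=e^{i\theta}|\psi_s\>$ for all $N$, contradicting the assumption. Hence $\|\E_{A^{(r)},A^{(s)}}^L\|\le C\kappa^L$ for some $\kappa<1$, and the off-diagonal term decays as $O(e^{-L})$. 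Combining both sources of error yields the stated $O(e^{-L})+O(e^{-(N-L)})$ bound.

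The main obstacle is bookkeeping rather than conceptual. One must arrange the tensor-network contraction carefully so that the two partial traces over the complement factor cleanly into $\E_{A^{(r)}}^{N-L}\otimes\E_{A^{(s)}}^{N-L}$ and so that the $L$ internal sites produce powers of the mixed operator in the correct orientation, with boundary vectors aligned with the top eigenspaces of the diagonal operators. All spectral input is borrowed from Lemma~\ref{lem:orthogonal}; no new inequality is required, consistent with the authors' remark that ``a similar proof'' applies.
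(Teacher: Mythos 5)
Your proposal is correct and follows exactly the route the paper intends: the paper gives no separate argument for this lemma beyond the remark that ``a similar proof'' to Lemma~\ref{lem:orthogonal} applies, and your reduction of $\tr{\rho^L_r\rho^L_s}$ to powers of the mixed transfer operators $\E_{A^{(r)},A^{(s)}}$, combined with the spectral-radius-strictly-less-than-one conclusion extracted from the Cauchy--Schwarz step of that lemma, is precisely that similar proof, worked out in full. No gap; the only caveat worth recording is that the spectral radius bound yields $\|\E_{A^{(r)},A^{(s)}}^L\|\le C\kappa^L$ via Gelfand's formula (possibly with a constant $C>1$), which still gives the claimed exponential decay in $L$.
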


The next thing we need is the following modification of \cite[Theorem 5]{PVWC07}.

\begin{lem}\label{thm:periodic}
Consider any MPS $|\psi_A\>\in\mathbb{C}^{d\otimes N}$ which has
only one block in its canonical form with $D\times D$ matrices
$\{A_i\}$ and such that $\E_A$ has $\beta$ eigenvalues of
modulus one. If $\beta$ is a factor of $N$, then the state can be
written as a superposition of $\beta$ $\beta$-periodic \emph{different and injective} MPS with equal coefficients and bonds $D_i$ (also with the property that $\sum_i D_i=D$). Otherwise, if $\beta$ is not a factor of $N$, then $|\psi_A\>=0$.
\end{lem}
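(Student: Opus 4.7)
The strategy is to reduce the problem to the structural content of Theorem 5 of \cite{PVWC07}. My starting point is the standard fact that an irreducible CP map in canonical form whose peripheral spectrum has cardinality $\beta$ must have that spectrum equal to the $\beta$-th roots of unity, and that there is an associated decomposition $\mathbb{C}^D=\bigoplus_{k=0}^{\beta-1}V_k$, with $\sum_k\dim V_k=D$, such that in this basis each Kraus operator assumes the cyclic form $A_i=\sum_{k=0}^{\beta-1}B_i^{(k)}$ with $B_i^{(k)}\colon V_k\to V_{k+1\,\bmod\,\beta}$. First I would invoke this structural result, which transfers all the work from spectral to block-cyclic information about the $A_i$.

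Second, I would expand
\begin{equation}
A_{i_1}\cdots A_{i_N}=\sum_{k_1,\dots,k_N}B_{i_1}^{(k_1)}\cdots B_{i_N}^{(k_N)}
\end{equation}
and observe that the summand vanishes unless $k_{j+1}\equiv k_j+1\pmod\beta$, so fixing the starting block $k$ leaves a unique surviving product mapping $V_k\to V_{k+N\,\bmod\,\beta}$. Taking traces, this is nonzero only when $\beta\mid N$, in which case
\begin{equation}
\tr{A_{i_1}\cdots A_{i_N}}=\sum_{k=0}^{\beta-1}\tr{B_{i_1}^{(k)}B_{i_2}^{(k+1)}\cdots B_{i_N}^{(k+N-1)}},
\end{equation}
expressing $|\psi_A\>$ as an equal-weight superposition of $\beta$ $\beta$-periodic MPS, each being the one-site translate of the preceding one. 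If $\beta\nmid N$, every summand is off-diagonal with respect to the cyclic decomposition, the trace vanishes identically, and $|\psi_A\>=0$.

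Third, I would verify injectivity and distinctness. Blocking $\beta$ sites, the $k$-th summand has Kraus operators $C^{(k)}_{(i_1,\dots,i_\beta)}=B_{i_1}^{(k)}B_{i_2}^{(k+1)}\cdots B_{i_\beta}^{(k+\beta-1)}\in\text{End}(V_k)$, whose CP map is precisely the restriction of $(\E_A)^\beta$ to $\text{End}(V_k)$; since the eigenvalue $1$ of $\E_A$ is simple, each such restriction is irreducible with simple eigenvalue $1$, yielding an injective MPS of bond dimension $D_k=\dim V_k$. Pairwise distinctness of the $\beta$ resulting states is then immediate from Lemma~\ref{lem:orthogonal}: if two agreed, their canonical forms would be unitarily related and this would force a nontrivial symmetry of the cyclic decomposition incompatible with $\beta$ being the actual size of the peripheral spectrum of $\E_A$. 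The main obstacle is really the first step, that is, extracting the cyclic block structure of the $A_i$ from the purely spectral hypothesis on $\E_A$; this is exactly the input borrowed from \cite[Theorem~5]{PVWC07}, and once it is in place the rest of the argument reduces to the trace identity above and an appeal to Lemma~\ref{lem:orthogonal}.
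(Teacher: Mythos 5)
Your proposal is correct and follows essentially the same route as the paper: both rest on the cyclic block decomposition from \cite[Theorem 5]{PVWC07}/\cite{FNW92}, establish injectivity of each $\beta$-periodic component by counting the peripheral eigenvalues of $\E_A^{\beta}$ restricted to the diagonal blocks, and prove pairwise distinctness by contradiction, using the uniqueness of the canonical form to produce an extra modulus-one eigenvector of $\E_A^{\beta}$ (the paper's $U\otimes|0\>\<1|$), which your ``nontrivial symmetry of the cyclic decomposition'' is gesturing at. The only difference is one of emphasis: you spell out the trace expansion and the vanishing for $\beta\nmid N$, which the paper delegates entirely to the cited theorem, while being slightly more terse on the two verification steps that the paper treats as the actual content of the lemma.
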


\begin{proof} The only thing to prove is that the $\beta$ $\beta$-periodic states are injective and different. In the proof of \cite[Theorem 5]{PVWC07}, based on \cite{FNW92}, one proves the existence of a set of orthogonal projectors $\{P_k\}$ with
$\sum_kP_k=\mathbbm{1}$ such that
\begin{equation}\label{block-diag}
\E_A^{\beta}(X)=\sum_{j,k}P_j\E_A^{\beta}(P_jXP_k)P_k,\end{equation}
 $\E_A^{\beta}$ has $1$ with degeneracy exactly $\beta$ as the only eigenvalue of modulus 1, and each block in the block-diagonal form of the Kraus operators of $\E_A^{\beta}$ given by (\ref{block-diag}) corresponds to one of the $\beta$-periodic states. Moreover, the space of fixed points is generated by $P_k$ and the space of fixed points of the adjoint map is generated by $P_k\Lambda P_k$.

 The cp maps associated to the $\beta$-periodic states are then $\E_k(X)=P_k\E^{\beta}(P_kXP_k)P_k$ (restricted to inputs with $X=P_kXP_k$). It is clear that $P_k$ is its only fixed point, $P_k\Lambda P_k$ the only fixed point of the adjoint map, and there is no other eigenvalue of modulus 1, which shows that all $\beta$-periodic states are injective. Now, if two of them were equal, we would reach a contradiction in the following way. For simplicity, we reason in the case of $2$ $2$-periodic states but the argument can be adapted straightforwardly to the general case. $\E_A^2$ has block-diagonal Kraus operators of the form $B_i\otimes |0\>\<0|+ C_i\otimes |1\>\<1|$. By the hypotheses and the uniqueness of the canonical form for injective MPS, $B_i=e^{i\theta}UC_iU^\dagger$ for all $i$. Then, apart from $\1\otimes |0\>\<0|$ and $\1\otimes |1\>\<1|$,  we also get $U\otimes |0\>\<1|$ as an eigenvector of $\E_A^{\beta}$ with eigenvalue of modulus 1; the desired contradiction.
\end{proof}

Finally, we need the following version of the Lieb-Schultz-Mattis theorem for $U(1)$ symmetry. It is interesting to note that it does not use any MPS structure, so it is valid in full generality and for any spatial dimension.
Let us recall that, in \cite[Lemma 17]{NJP}, we showed that any quantum state with a $U(1)$ symmetry given by the canonical generator of spin $S_z^{(J)}$ verifies that \begin{equation}\label{eq:sym}u_g^{\otimes N}|\psi\>=e^{igNm}|\psi\>\end{equation} with $u_g=e^{igS_z^{(J)}}$ and a  magnetization per particle $m$.
\begin{lem}\label{LSM}
Let $m$ be any rational number and  $p\in \mathbb{N}$ such that there exist two quantum states of (local spin $J$ and) $pN$ and $(N+1)p$ particles respectively, for some $N,$ having both of them magnetization per particle $m$. Then $p(J-m)=q$ with $q$ integer.
\end{lem}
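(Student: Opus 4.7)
The plan is to exploit the rigid arithmetic of the total-$J_z$ spectrum on a system of $K$ spin-$J$ particles, which has nothing to do with MPS. First I would invoke the cited equation (\ref{eq:sym}), which asserts that any state $|\psi\>$ on $K$ sites with magnetization per particle $m$ satisfies $u_g^{\otimes K}|\psi\> = e^{igKm}|\psi\>$ for all $g$. Expanding $|\psi\>$ in the eigenbasis of $J_z^{\mathrm{tot}} = \sum_i S_z^{(J),i}$ and using linear independence (over $\mathbb{C}$) of the characters $g\mapsto e^{ig\lambda}$ for distinct $\lambda$, this forces $|\psi\>$ to be a genuine eigenvector of $J_z^{\mathrm{tot}}$ with eigenvalue exactly $Km$.

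Next I would recall that the spectrum of $J_z^{\mathrm{tot}}$ on $K$ spin-$J$ sites consists precisely of the numbers $KJ - n$ with $n$ an integer in $[0,\,2JK]$, since each single-site eigenvalue is $J - k$ for some $k\in\{0,1,\ldots,2J\}$ and the total spectrum is the Minkowski sum. Consequently, whenever a state on $K$ sites has magnetization per particle $m$, the quantity $K(J-m)$ must be a non-negative integer.

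Applying this conclusion to the two hypothesised states with $K_1=pN$ and $K_2=p(N+1)$, both $pN(J-m)$ and $p(N+1)(J-m)$ lie in $\mathbb{Z}_{\ge 0}$. Subtracting gives $p(J-m)\in\mathbb{Z}$, which is exactly the claim $p(J-m)=q$ with $q$ integer. Note the essential role of having \emph{two} particle numbers differing by $p$: a single condition only constrains $K(J-m)$ as a whole, whereas the difference isolates $p(J-m)$.

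The main obstacle is conceptual rather than computational: one must be careful that (\ref{eq:sym}) truly implies a strict eigenvalue equation for $J_z^{\mathrm{tot}}$, rather than merely invariance up to a phase that could be reconciled modulo $2\pi$. This is handled by noting that the identity holds for \emph{every} real $g$, so differentiating at $g=0$ (equivalently, Fourier uniqueness on the circle group) pins the eigenvalue down uniquely as $Km$ and not some value congruent to it. Once this is in place, the remainder is elementary integer arithmetic, and, in line with the paper's remark preceding the lemma, no MPS structure is invoked, so the statement extends verbatim to arbitrary spatial dimension.
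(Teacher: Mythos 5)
Your proposal is correct and follows essentially the same route as the paper: use equation (\ref{eq:sym}) for all $g$ to force the total magnetization $Km$ to lie in the spectrum of $J_z^{\mathrm{tot}}$, observe that this spectrum is $KJ$ minus a non-negative integer, and subtract the two resulting integrality conditions for $K=pN$ and $K=p(N+1)$ to isolate $p(J-m)\in\mathbb{Z}$. The paper phrases the spectral step as the existence of a basis string $k_1,\ldots,k_K$ with $\sum_j k_j=Km$, but this is the same argument.
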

\begin{proof}
By expanding equation (\ref{eq:sym}) in the canonical basis, we get $\sum_{k_1\cdots k_{pN}}c_{k_1\cdots k_{pN}}e^{ig\sum_j k_j}|k_1\cdots k_{pN}\>=\sum_{k_1\cdots k_{pN}}e^{igpNm}c_{k_1\cdots k_{pN}}|k_1\cdots k_{pN}\>$. Since it is a basis and the state is not zero, there must exist $k_1,\cdots, k_{pN}\in  \{-J,-J+1,\ldots J-1,J\}$ such that $\sum_j k_j=Npm$. For the same reason, there must exist  $k_1',\cdots, k_{pN+p}'\in  \{-J,-J+1,\ldots J-1,J\}$ such that $\sum_j k_j'=(Np+p)m$. Subtracting, we get that $mp=\sum_j k_j'-\sum_j k_j$ has the same character (integer or semi-integer) as $pJ$.
\end{proof}

With this at hand, if we consider an MPS $|\psi\>$ of spin $J$ and $pN$ particles with a $U(1)$ symmetry, given by the canonical generator of spin $S_z^{(J)}$, we have the following lemma.
\begin{lem}\label{thm:GLSM}
Let $p$ be the smallest integer such that, after blocking $p$ sites together, $|\psi\>$ has a block-diagonal representation with injective blocks. Then $p (J - m)= q$, with $q$ an integer.
\end{lem}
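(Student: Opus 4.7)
The plan is to reduce the claim to Lemma~\ref{LSM} by exhibiting, for the magnetization $m$, two $U(1)$-symmetric states of $pN$ and of $p(N+1)$ particles. After blocking $p$ sites, the hypothesis says the canonical form is block-diagonal with injective blocks; call the associated injective MPS $|\phi_1\>,\ldots,|\phi_\beta\>$. By Lemma~\ref{thm:periodic} and Lemma~\ref{lem:orthogonal} they are pairwise different and mutually orthogonal (up to exponentially small corrections in $pN$), and each $|\phi_i\>$ makes sense on any number $M\ge 1$ of blocked sites, i.e.\ on $pM$ original spins.

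The crux is to show that every $|\phi_i\>$ has magnetization per particle equal to $m$. The symmetry $u_g^{\otimes p}$ acts only on the physical indices, so it preserves the block-diagonal structure of the canonical tensor: $u_g^{\otimes pN}|\psi\>$ is again a superposition of $\beta$ injective MPS obtained by replacing $A^{(k)}_{\vec i}$ by $\sum_{\vec j}\langle \vec i| u_g^{\otimes p}|\vec j\rangle A^{(k)}_{\vec j}$ block by block. On the other hand, the symmetry identity forces $u_g^{\otimes pN}|\psi\>=e^{igpNm}|\psi\>$, which is also a superposition of $\beta$ injective MPS. By uniqueness of the canonical form for injective MPS and the pairwise distinctness of the blocks, the two decompositions must agree up to a permutation and phases: $u_g^{\otimes pN}|\phi_k\>=e^{i\theta_k(g)}|\phi_{\sigma_g(k)}\>$. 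Continuity of $g\mapsto\sigma_g$ into the symmetric group $S_\beta$, together with $\sigma_0=\mathrm{id}$ and connectedness of $U(1)$, pins $\sigma_g=\mathrm{id}$ for all $g$. Plugging back and using orthogonality of the $|\phi_k\>$'s forces $e^{i\theta_k(g)}=e^{igpNm}$ for every $k$, as desired.

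Repeating the same argument at system size $N+1$ blocks, i.e.\ $p(N+1)$ original spins, shows that the state defined by the tensor of any single block (say $|\phi_1\>$) on $N+1$ blocks also has magnetization per particle $m$. Lemma~\ref{LSM} applied to the pair of states, of $pN$ and $p(N+1)$ particles and common magnetization $m$, then immediately yields $p(J-m)\in\mathbb{Z}$.

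The main obstacle is the rigidity step in the second paragraph: extracting a well-defined permutation $\sigma_g$ of the blocks from the two expressions of $u_g^{\otimes pN}|\psi\>$ requires the uniqueness of the canonical form for injective MPS (built into Lemma~\ref{thm:periodic}) together with the orthogonality statement of Lemma~\ref{lem:orthogonal}, so that the $U(1)$ action cannot mix inequivalent blocks and must act as a scalar on each. Once this is in place, continuity and orthogonality do the rest.
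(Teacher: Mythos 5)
Your overall architecture is the same as the paper's: block $p$ sites, decompose $|\psi\>$ into injective MPS blocks, argue that the $U(1)$ action cannot mix inequivalent blocks (your continuity-of-the-permutation argument is essentially a re-derivation of the characterization of symmetries in \cite[Theorem 5]{SWPC09}, which the paper simply cites), conclude each block carries magnetization $m$, and feed two system sizes into Lemma~\ref{LSM}. Up to the minor point that the injective blocks need not be pairwise different in the multi-block case (the paper is careful to say ``equal or linearly independent''; grouping repeated blocks repairs this), that part of your argument is sound.

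The genuine gap is the sentence ``Repeating the same argument at system size $N+1$ blocks.'' The hypothesis gives you the global symmetry $u_g^{\otimes pN}|\psi\>=e^{igpNm}|\psi\>$ at the \emph{single} size $pN$; there is no symmetric state of $p(N+1)$ particles handed to you, so there is nothing to repeat the argument on. (If you instead assume the symmetry with magnetization $m$ at all sizes as part of the hypothesis, the whole block decomposition becomes superfluous --- you would apply Lemma~\ref{LSM} directly to $|\psi_{pN}\>$ and $|\psi_{p(N+1)}\>$ --- and, worse, that reading is not available in the applications, e.g.\ in Lemma~\ref{thm:GLSM-reverse} one must run the argument separately on each block with its own period.) The missing ingredient is precisely what the paper pulls from \cite{SWPC09}: for an \emph{injective} MPS, a global symmetry $u_g^{\otimes pN}|\phi_k\>=e^{igpNm}|\phi_k\>$ at one (sufficiently large) size is equivalent to a tensor-level identity $\sum_{\vec\jmath}\<\vec\imath\,|u_g^{\otimes p}|\vec\jmath\,\>A^{(k)}_{\vec\jmath}=e^{igpm}V_gA^{(k)}_{\vec\imath}V_g^\dagger$, and only this local form propagates the symmetry, with the \emph{same} magnetization per particle, to the state generated by the same block tensor on $N+1$ blocks. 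Insert that step and your proof closes; without it, the second state required by Lemma~\ref{LSM} is not produced.
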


To see it we consider blocks of $p$-sites. From \cite[Theorem 5]{SWPC09}, we know that each block is an injective MPS with the same symmetry. Since, by Lemma \ref{lem:orthogonal}, states corresponding to different blocks are equal or linearly independent, all of them must have also magnetization $m$. Now, by the characterization of symmetries for injective MPS \cite{SWPC09}, we know that the matrices defining each block inherit the symmetry and therefore the associated MPS has magnetization $m$ for all system sizes that are multiple of $p$. Lemma \ref{LSM} finishes the argument.

We also get a reciprocal. 

\begin{lem}\label{thm:GLSM-reverse}
Let us assume that $J-m=\frac{q}{p}$ with ${\rm gcd}(p,q)=1$ in a $U(1)$ symmetric MPS, then there exists $\gamma \in \mathbb{N}$ such that
the MPS has only $\gamma p$-periodic blocks. Moreover (trivially from Lemma \ref{thm:periodic}), states belonging to blocks of different periods are different.
\end{lem}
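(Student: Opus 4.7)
The plan is to reduce the statement to applying Lemma~\ref{thm:GLSM} blockwise. First I would put $|\psi\>$ into its canonical form, obtaining a block-diagonal structure in which block $i$ has period $p_i$, meaning its restricted transfer map has exactly $p_i$ eigenvalues of modulus one. By Lemma~\ref{thm:periodic}, blocking $p_i$ sites decomposes block $i$ into $p_i$ mutually different injective MPS, pairwise related by translation.

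The key intermediate step is to show that each block, regarded as an MPS in its own right, carries the $U(1)$ symmetry with magnetization per particle exactly $m$. For this I would invoke the characterization of symmetries of injective MPS from \cite{SWPC09}, together with the orthogonality of distinct injective MPS given by Lemma~\ref{lem:orthogonal}, to argue that the global action $u_g^{\otimes N}$ splits block-diagonally and each inequivalent block is itself an eigenvector of $u_g^{\otimes N}$ with eigenvalue $e^{igNm}$. Granted this, Lemma~\ref{thm:GLSM} applied to the $i$-th single-block MPS yields $p_i(J-m) = p_i q/p \in \mathbb{Z}$, and since $\gcd(p,q)=1$ this forces $p \mid p_i$. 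Writing $p_i = \gamma_i p$ and setting $\gamma = \operatorname{lcm}(\gamma_1,\dots,\gamma_n)$ produces an integer such that every $p_i$ divides $\gamma p$, which is the claimed uniform $\gamma p$-periodicity. The ``moreover'' part is then immediate from Lemma~\ref{thm:periodic}: two blocks with distinct periods cannot define the same state, for that would contradict the uniqueness, up to unitary equivalence, of the canonical form.

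The hard part will be the intermediate step of attributing a well-defined magnetization $m$ to each block. A priori the global $U(1)$ action could permute blocks within an equivalence class, in which case no individual block would be an eigenvector of $u_g^{\otimes N}$ and the per-block magnetization would not even be a well-defined quantity. The resolution I have in mind is to apply Lemma~\ref{lem:orthogonal} after blocking $p_i$ sites so that distinct injective pieces are mutually orthogonal; any symmetry-induced permutation must then respect that orthogonal decomposition and therefore act within each equivalence class. Combined with the phase characterization of $U(1)$-symmetric injective MPS from \cite{SWPC09}, this pins down the per-site symmetry action on every block to the common phase $e^{ig(J-m)}$, after which the remainder of the argument is essentially bookkeeping with the divisibility relation $p \mid p_i$.
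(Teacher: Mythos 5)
Your proposal is correct and follows essentially the same route as the paper: the symmetry characterization for injective MPS from \cite{SWPC09} combined with the orthogonality statement of Lemma~\ref{lem:orthogonal} assigns the common magnetization $m$ to every block, and the Lieb--Schultz--Mattis-type Lemma~\ref{LSM} (which you reach indirectly by applying Lemma~\ref{thm:GLSM} blockwise) then forces each block period to be a multiple of $p$. Your explicit construction of $\gamma$ as a least common multiple, and your discussion of why the symmetry cannot permute inequivalent blocks, merely spell out steps the paper leaves implicit.
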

\begin{proof}
As above, all injective MPS corresponding to the blocks must have the same symmetry and the same magnetization $m$. Therefore, Lemma \ref{LSM} shows that only blocks of period multiple of $p$ can appear.
\end{proof}

We are finally ready to prove Theorem 1:

\begin{proof}
Let $|\psi \rangle$ be an MPS, which is translational and $U(1)$ invariant. We also impose that this MPS has spin $J$ and magnetization per particle $m,$ verifying $J-m = \frac{q}{p}$ ($p$ and $q$ coprime) and consider its canonical form. Lemma \ref{thm:GLSM-reverse} gives us $\gamma \in \mathbb{N}$ such that all the blocks of  the canonical form of $|\psi\>$ have period $\gamma p.$ Consequently, if we block $\gamma p$ spins, then we can write the new matrices $A_i$ as block-diagonal, with each block being injective and different (see Lemma \ref{thm:periodic}). Using Lemma 3, the injective and different MPS associated to the blocks are also orthogonal in the thermodynamic limit. 

Let $L = k \gamma p,$ where $k \in \mathbb{N}.$ We observe that the reduced density matrix of size $L,$ verifies 
 \begin{equation}
 \label{eq:formsum}
 \rho_L=\sum_{i=1}^n  \mu_i \rho_i,
 \end{equation}
up to a correction exponentially small in $N-L,$ where we have used Lemma 3. Here, the $\rho_i$'s are the reduced density matrices corresponding to single blocks (giving rise to different states) and repeated blocks in this sum are simply reflected in the $\mu_i$'s. 

Analizing the single block case, we can ensure that $S(\rho_i) \geq \log(\gamma p) \geq \log (p)$ for all $i,$ up to an exponentially small correction in $N-L$ and in $k.$ This is deduced by using Jensen's inequality so that $S(\rho_i) \geq - \log( \text{tr}(\rho_i^2) ),$ and recalling Lemma \ref{lem:density}. Using (\ref{eq:formsum}) and the concavity of the Von Neumann entropy, it is clear that, if we have several blocks then $S(\rho_L) \geq \min_i S(\rho_i) \geq \log (p)$ up to another exponentially small correction in $N-L$ and $k,$ yielding the desired result.
\end{proof}

\section{Bounds on MPS approximation}

Let ${A_i \in \mathcal{M}_D}$ be the canonical Kraus operators
defining an injective MPS, with $\Lambda$ as its fixed point. We define the
normalized reduced density matrix for $L$ particles $\rho_A^L$, up to a correction exponentially small in $N-L$, by
\begin{equation}
\rho_A^L = \sum_{\begin{subarray}{l} i_1,\ldots , i_L \\
j_1,\ldots, j_L
\end{subarray}}
\tr{A_{j_L}^{\dagger}\cdots A_{j_1}^{\dagger}\Lambda A_{i_1}\cdots
A_{i_L}} |i_1 \cdots i_L\>\<j_1 \cdots j_L| 
\end{equation} 

We will also define $\rho_{\tilde{A}}^L$ as the normalized density matrix resulted of projecting the Kraus
operators (and the fixed point) into a subspace of dimension
$\tilde{D} \leq D$, that is, $\tilde{A}_i = P A_i P$ and
$\tilde{\Lambda} = P \Lambda P$ with
$P=\sum_{i=1}^{\tilde{D}}|i\>\<i|.$ $\E$ will be the cp map associated to $A_i$ and $\tilde{\E}$ the one
associated to $\tilde{A}.$  Taking all this into account, we can
state and prove the following Theorem:

\begin{thm*}
$$\| \rho_A^L - \rho_{\tilde{A}}^L \|_2 \leq 2 \tr{\tilde{\Lambda}^{1/2}}\sqrt{L}\delta^{1/4} + (2L+3)\delta,$$ $$\| \rho_A^L - \rho_{\tilde{A}}^L \|_1 \leq 2\sqrt{2}\tilde{D}\sqrt{L}\delta^{1/4}+  (2L+3)\delta$$
where $\delta=\tr{\Lambda-\tilde{\Lambda}}$.
\end{thm*}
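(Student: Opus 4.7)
The plan is to build explicit purifications of both reduced states and compare them by a hybrid argument in which the Kraus operators $A_i$ are replaced by $\tilde A_i=PA_iP$ one site at a time. Concretely, I take $|\Psi\rangle=\sum_{i_1\ldots i_L,\alpha,\beta}(\Lambda^{1/2}A_{i_1}\cdots A_{i_L})_{\alpha\beta}|i_1\cdots i_L\rangle|\alpha\beta\rangle$, which is normalised thanks to the canonical conditions $\sum_iA_iA_i^\dagger=\1$ and $\sum_iA_i^\dagger\Lambda A_i=\Lambda$, and analogously $|\tilde\Psi\rangle$ using $\tilde A_i$ and $\tilde\Lambda^{1/2}$; tracing out the auxiliary system returns $\rho_A^L$ and $\rho_{\tilde A}^L$ respectively, up to the exponentially small error already absorbed in the statement. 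Since $\tilde A_i$ does not verify the canonical conditions, $|\tilde\Psi\rangle$ is only normalised up to an $O(L\delta)$ defect, and this is the source of the additive $(2L+3)\delta$ correction in the final bound.

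The two fundamental Hilbert-Schmidt identities $\sum_i\|\Lambda^{1/2}(\1-P)A_i\|_{\mathrm{HS}}^2=\sum_i\|\Lambda^{1/2}A_i(\1-P)\|_{\mathrm{HS}}^2=\delta$ follow immediately by sandwiching the canonical conditions with $(\1-P)$'s and taking the trace; they quantify exactly the leakage through $P$. An inductive version of the same argument, obtained by repeatedly applying $\E_A$ or its adjoint, yields $\sum_{i_1\ldots i_L}\|\Lambda^{1/2}A_{i_1}\cdots A_{i_L}(\1-P)\|_{\mathrm{HS}}^2\le L\delta$, which is what will make the intermediate leakage terms telescope cleanly along the hybrid chain.

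Introduce the hybrid vectors $|\Psi_0\rangle=|\Psi\rangle,\,|\Psi_1\rangle,\,\ldots,\,|\Psi_{L+1}\rangle=|\tilde\Psi\rangle$, where at step $k\le L$ the $k$-th matrix $A_{i_k}$ is replaced by $\tilde A_{i_k}$ and the final step replaces $\Lambda^{1/2}$ by $\tilde\Lambda^{1/2}$. Because every $\tilde A$ is flanked by projectors $P$, the difference $|\Psi_{k-1}\rangle-|\Psi_k\rangle$ algebraically collapses to a string containing a single factor of the form $\Lambda^{1/2}(\1-P)A_{i_k}$ or $\Lambda^{1/2}A_{i_k}(\1-P)$, whose squared norm is controlled by $\delta$ via the identities above. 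Summing the $L+1$ hybrid steps then bounds the overlap $\langle\Psi|\tilde\Psi\rangle$ from below. The passage to a 2-norm bound on $\rho_A^L-\rho_{\tilde A}^L$ is made by writing $|\Psi\rangle\langle\Psi|-|\tilde\Psi\rangle\langle\tilde\Psi|=(|\Psi\rangle-|\tilde\Psi\rangle)\langle\Psi|+|\tilde\Psi\rangle(\langle\Psi|-\langle\tilde\Psi|)$ and taking the partial trace; the prefactor $\tr{\tilde\Lambda^{1/2}}$ arises naturally from the effective Schmidt rank of the $|\tilde\Psi\rangle$-purification, and trading the vector-norm estimate for a fidelity/overlap estimate is what produces the $\delta^{1/4}$ exponent in place of the naive $\sqrt{\delta}$.

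Finally, to deduce the 1-norm bound I would project both reduced states onto the support of $\rho_{\tilde A}^L$ (of dimension at most $\tilde D^2$), apply the rank-vs-norm inequality $\|X\|_1\le\sqrt{2}\,\tilde D\,\|X\|_2$ inside this subspace, and absorb the orthogonal leakage piece into the $(2L+3)\delta$ correction. The main technical obstacle I anticipate is precisely extracting the $\delta^{1/4}$ exponent, rather than the $\sqrt{\delta}$ that a direct triangle-inequality telescoping on the density matrices would yield: one has to exploit either the orthogonality of consecutive hybrid differences or the fidelity-to-trace-distance passage, and in both cases keep careful track of the non-normalisation of $|\tilde\Psi\rangle$ across every step of the chain so that the $O(L\delta)$ normalisation defect does not get amplified into the leading-order term.
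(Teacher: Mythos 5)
Your route is genuinely different from the paper's, and it is viable. The paper never purifies: it interpolates between $\rho_A^L$ and $\rho_{\tilde{A}}^L$ through the intermediate operators $\sigma_A$ (replace $\Lambda\to\tilde\Lambda$), $\sigma_{A,P}$ (sandwich with $P$) and the unnormalized $\phi_{\tilde A}$, paying $(2L+3)\delta$ in trace norm for these easy steps, and then bounds the hard piece $\|\sigma_{A,P}-\phi_{\tilde A}\|_2^2$ by a site-by-site hybrid at the level of the doubled transfer operators $\EE\to\FF\to\tilde\EE$, controlling each of the $\sim 4L$ cross terms via Cauchy--Schwarz by $\tr{\tilde\Lambda^{1/2}}^2\sqrt\delta$; because the telescoping happens inside the \emph{square} of the 2-norm, the square root is what produces $\sqrt{L}\,\delta^{1/4}$, and the 1-norm is then recovered from a rank bound (hence the factor $\tilde D$). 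Your vector-level telescoping on purifications, if completed, actually gives a \emph{stronger} estimate: each hybrid step costs at most $2\sqrt\delta$ in Hilbert-space norm, the final step $\Lambda^{1/2}\to\tilde\Lambda^{1/2}$ costs nothing (since $\tilde A_{i_1}$ already begins with $P$ and $[\Lambda,P]=0$), so $\|\,|\Psi\>-|\tilde\Psi\>\,\|\le 2L\sqrt{\delta}$, and contractivity of the partial trace plus the $O(L\delta)$ renormalization defect yield $\|\rho_A^L-\rho_{\tilde A}^L\|_1\le 4L\sqrt\delta+O(L\delta)$ with no dimensional prefactor at all. Your stated worry about ``extracting the $\delta^{1/4}$'' is backwards: since $\sqrt\delta\le\delta^{1/4}$, the exponent $1/4$ is the \emph{weaker} one, and the fidelity-to-trace-distance detour you contemplate would only degrade your bound for no reason. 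Likewise the rank/$\tr{\tilde\Lambda^{1/2}}$ machinery in your last paragraph is unnecessary on your route.

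Two points do need repair. First, for $k>1$ the hybrid difference does not collapse to a factor $\Lambda^{1/2}(\1-P)A_{i_k}$ adjacent to $\Lambda^{1/2}$: writing $A-PAP=P^{\perp}AP+AP^{\perp}$, the projector $P^{\perp}$ sits in the middle of the string, and to reduce to your two basic identities you must prove the operator inequality $\sum_{i_1\ldots i_{k-1}}\tilde A_{i_{k-1}}^{\dagger}\cdots\tilde A_{i_1}^{\dagger}\Lambda\,\tilde A_{i_1}\cdots\tilde A_{i_{k-1}}\le\Lambda$. This follows by induction from $\sum_iA_i^{\dagger}\Lambda A_i=\Lambda$, positivity of $X\mapsto\sum_iA_i^{\dagger}XA_i$, and the fact that $0\le X\le\Lambda$ implies $PXP\le P\Lambda P\le\Lambda$ --- the last inequality holding only because $P$ and $\Lambda$ commute (it is false for a generic projector), so this commutation must be invoked explicitly. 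Second, the resulting bound $cL\sqrt\delta$ does not literally imply the displayed inequality for every $\delta\in[0,1]$ (the two only compare cleanly once $\sqrt{L}\,\delta^{1/4}\lesssim\tilde D$); either invoke $\|\rho_A^L-\rho_{\tilde A}^L\|_1\le 2$ to dismiss the regime where both bounds are vacuous, or simply state and use your own inequality, which is all that the application in Theorem 2 requires.
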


In order to do this, we must prove the following two Lemmas as preliminary results:

\begin{lem} \label{lemma1}
$\| \tilde{\E}^L(\Lambda) -\Lambda\|_1 \leq 2L\delta$. In particular,
$\tr{\tilde{\E}^L(\Lambda)} \geq 1 - 2 L \delta$.
\end{lem}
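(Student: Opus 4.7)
The plan is to combine a one-step error bound with a telescoping sum, exploiting that the projected map is still a trace-norm contraction. I read $\tilde{\E}^L(\Lambda)$ in the ``Heisenberg'' sense, i.e.\ as iteration of the adjoint $\tilde{\E}^*(X) := \sum_i (PA_iP)^\dagger X (PA_iP)$; its un-projected version $\E^*(X) = \sum_i A_i^\dagger X A_i$ fixes $\Lambda$ by canonical-form condition (ii) and is trace-preserving on positive operators by (i), since $\tr{\E^*(X)} = \tr{X \sum_i A_i A_i^\dagger} = \tr{X}$.

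First I would verify that $\tilde{\E}^*$ is a contraction in trace norm. The operator inequality
\begin{equation*}
\sum_i P A_i P A_i^\dagger P \;\le\; \sum_i P A_i A_i^\dagger P \;=\; P \;\le\; \1,
\end{equation*}
combined with cyclicity, gives $\tr{\tilde{\E}^*(X)} \le \tr{X}$ for $X\ge 0$. Decomposing a self-adjoint $X = X_+ - X_-$ and using complete positivity of $\tilde{\E}^*$ then upgrades this to $\|\tilde{\E}^*(X)\|_1 \le \|X\|_1$.

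Second I would establish the one-step bound. Because $\Lambda$ is diagonal and $P = \sum_{i=1}^{\tilde{D}} |i\>\<i|$ is a coordinate projector, $P\Lambda P = \tilde{\Lambda}$ and $\Delta := \Lambda - \tilde{\Lambda} \ge 0$ satisfies $\tr{\Delta} = \delta$. Using $\sum_i A_i^\dagger \Lambda A_i = \Lambda$ to rewrite $\sum_i A_i^\dagger \tilde{\Lambda} A_i = \Lambda - \E^*(\Delta)$, I get
\begin{equation*}
\tilde{\E}^*(\Lambda) - \Lambda \;=\; P\bigl(\Lambda - \E^*(\Delta)\bigr)P - \Lambda \;=\; -\Delta \;-\; P\,\E^*(\Delta)\,P.
\end{equation*}
Both summands have trace norm at most $\delta$: $\|\Delta\|_1 = \delta$ by construction, and $\|P\,\E^*(\Delta)\,P\|_1 = \tr{P\E^*(\Delta)P} \le \tr{\E^*(\Delta)} = \tr{\Delta} = \delta$ because $\E^*(\Delta)\ge 0$ and conjugation by the projector $P$ cannot increase its trace. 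The triangle inequality gives $\|\tilde{\E}^*(\Lambda) - \Lambda\|_1 \le 2\delta$.

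Third I would telescope,
\begin{equation*}
\tilde{\E}^{*L}(\Lambda) - \Lambda \;=\; \sum_{k=0}^{L-1} \tilde{\E}^{*k}\bigl(\tilde{\E}^*(\Lambda) - \Lambda\bigr),
\end{equation*}
and invoke the contraction property from the first step to kill each $\tilde{\E}^{*k}$ in trace norm, giving $\|\tilde{\E}^{*L}(\Lambda) - \Lambda\|_1 \le L\cdot 2\delta = 2L\delta$. The ``in particular'' statement follows from $|\tr{\tilde{\E}^{*L}(\Lambda)} - 1| = |\tr{\tilde{\E}^{*L}(\Lambda) - \Lambda}| \le \|\tilde{\E}^{*L}(\Lambda) - \Lambda\|_1$ together with $\tr{\Lambda} = 1$. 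The only real obstacle is cosmetic: matching the Schr\"odinger/Heisenberg conventions and justifying the operator inequality that makes $\tilde{\E}^*$ a trace-norm contraction; once those are in place, the estimate is essentially forced.
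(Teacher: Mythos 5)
Your proof is correct and follows essentially the same route as the paper's: a one-step bound $\|\tilde{\E}(\Lambda)-\Lambda\|_1\le 2\delta$ obtained by splitting off the projection error $\Lambda-\tilde{\Lambda}$ and using that $\E$ and $P\bullet P$ do not increase the trace norm, followed by telescoping via the contractivity of $\tilde{\E}$. The only difference is cosmetic: you make explicit the Heisenberg-picture convention and verify the trace-norm contractivity of $\tilde{\E}$ from the operator inequality $\sum_i PA_iPA_i^\dagger P\le\1$, steps the paper leaves implicit.
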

\begin{proof}
Using both the definition of $\delta$ and that $\E$ is contractible for
the 1-norm, we get that $\| \Lambda -
\E(P\Lambda P) \|_1 \leq \delta.$ The map $P \bullet P$ is also contractible for the
1-norm, so
\begin{eqnarray*}
&& \|\Lambda - P\E(P \Lambda P) P\|_1 \\
&\leq& \| \Lambda - P\Lambda P\|_1
 +\|P\Lambda P - P \E(P\Lambda P)P\|_1 \\
&\leq& 2\delta
\end{eqnarray*}
This means that $\| \Lambda - \tilde{\E}(\Lambda)\|_1 \leq 2\delta,$ since $\tilde{\E}(\Lambda) = P\E(P \Lambda P) P.$
However, $\tilde{\E}$ is also contractible respect to the 1-norm, so
\begin{eqnarray*}
\| \Lambda - \tilde{\E}^2(\Lambda) \|_1 &\leq& \|\Lambda -
\tilde{\E}(\Lambda)\|_1  +\| \tilde{\E}(\Lambda) -
\tilde{\E}^2(\Lambda)\|_1 \\ &\leq& 4\delta
\end{eqnarray*}
The result can be obtained by induction.
\end{proof}

We will now define, under the previous notation for the Kraus operators and the fixed point, the following operators
\begin{widetext}
\begin{equation*}
\sigma_{A}  = \sum_{\begin{subarray}{l}
i_1,\ldots , i_L\\ j_1,\ldots, j_L
\end{subarray}}
\tr{A_{j_L}^{\dagger}\cdots A_{j_1}^{\dagger}\tilde{\Lambda}
A_{i_1}\cdots  A_{i_L}} |i_1 \cdots i_L\>\<j_1 \cdots j_L |
\end{equation*}
\begin{eqnarray*}
&&\sigma_{A,P}  = \sum_{\begin{subarray}{l}
i_1,\ldots , i_L\\ j_1,\ldots, j_L
\end{subarray}}
\tr{P A_{j_L}^{\dagger}\cdots A_{j_1}^{\dagger}\tilde{\Lambda}
A_{i_1}\cdots  A_{i_L} P} |i_1 \cdots i_L\>\<j_1 \cdots j_L |
\end{eqnarray*} \end{widetext}where it is important to note that $\sigma_{A,P}$ is a positive operator.

\begin{lem} \label{lemma2}
$\| \rho_A - \rho_{\tilde{A}}\|_2 \leq
\|\sigma_{A,P} -
\phi_{\tilde{A}}\|_2 + (2L+3)\delta$, where
$\phi_{\tilde{A}} = \tr{\tilde{\E}^L(\Lambda)}
\rho_{\tilde{A}}$ is the not normalized reduced density
matrix generated by $\tilde{A}_i.$ The same holds changing the 2-norm by the 1-norm in both sides of the inequality.
\end{lem}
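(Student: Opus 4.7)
The plan is to bound $\|\rho_A - \rho_{\tilde{A}}\|$ via the four-term triangle inequality
\[
\|\rho_A - \rho_{\tilde{A}}\| \le \|\rho_A - \sigma_A\| + \|\sigma_A - \sigma_{A,P}\| + \|\sigma_{A,P} - \phi_{\tilde{A}}\| + \|\phi_{\tilde{A}} - \rho_{\tilde{A}}\|,
\]
and to establish $\delta$, $\delta$, and $2L\delta$ as $1$-norm bounds on the first, second, and fourth summands respectively. Since the Schatten $2$-norm is dominated by the $1$-norm, the same three estimates transfer at once to the $2$-norm statement, provided one keeps $\|\sigma_{A,P}-\phi_{\tilde{A}}\|_2$ on the right-hand side of the $2$-norm inequality.

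The first bound is clean: $\rho_A$ and $\sigma_A$ differ only by the substitution $\Lambda\mapsto\tilde{\Lambda}$ in the central slot, and the linear map $T(X)=\sum_{ij}\tr{A_{j_L}^{\dagger}\cdots A_{j_1}^{\dagger} X A_{i_1}\cdots A_{i_L}}\,|i\>\<j|$ is positive. Thus $T(\Lambda-\tilde{\Lambda})\ge 0$, its $1$-norm equals its trace, and using cyclicity together with the iterated unitality $\E_A^L(\1)=\1$ (condition (i)) this trace collapses to $\tr{\Lambda-\tilde{\Lambda}}=\delta$. The fourth bound is equally direct: from $\phi_{\tilde{A}}=\tr{\tilde{\E}^L(\Lambda)}\,\rho_{\tilde{A}}$ and $\tr{\rho_{\tilde{A}}}=1$, both norms of $\phi_{\tilde{A}}-\rho_{\tilde{A}}$ equal $|\tr{\tilde{\E}^L(\Lambda)}-1|\le\|\tilde{\E}^L(\Lambda)-\Lambda\|_1\le 2L\delta$ by Lemma \ref{lemma1}.

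The middle bound is the delicate one. My plan is to insert resolutions $\1=P+Q$ on both sides of $A_{j_L}^{\dagger}\cdots A_{j_1}^{\dagger}\tilde{\Lambda}A_{i_1}\cdots A_{i_L}$ inside the trace defining $\sigma_A$. Of the four resulting terms, the $PP$-term is precisely $\sigma_{A,P}$, the two mixed $PQ$/$QP$ cross-terms each vanish after cycling since $PQ=QP=0$, and the remaining $QQ$-term yields $\sigma_A-\sigma_{A,P}=\sum_{ij}\tr{QA_{j_L}^{\dagger}\cdots A_{j_1}^{\dagger}\tilde{\Lambda}A_{i_1}\cdots A_{i_L}Q}\,|i\>\<j|$. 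Spectrally expanding $\tilde{\Lambda}=\sum_k\mu_k|k\>\<k|$ exhibits this as a non-negatively weighted sum of rank-one positive operators, so it is positive and its $1$-norm equals its trace, which simplifies to $\tr{\tilde{\Lambda}\,\E_A^L(Q)}$. Using $\tilde{\Lambda}\le\Lambda$ together with the dual fixed-point equation $(\E_A^*)^L(\Lambda)=\Lambda$ (condition (ii) iterated), I conclude $\tr{\tilde{\Lambda}\E_A^L(Q)}\le\tr{\Lambda\E_A^L(Q)}=\tr{\Lambda Q}=\delta$.

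Adding $\delta+\delta+2L\delta=(2L+2)\delta\le(2L+3)\delta$ then yields the lemma in both norms. The step I expect to require the most care is the $\sigma_A$-versus-$\sigma_{A,P}$ reduction: the vanishing of the two cross-terms and the positivity of the $QQ$-remainder are both short once set up, but one must be attentive to where each projector sits relative to $\tilde{\Lambda}=P\Lambda P$ in the cyclic trace, and to which of $\E_A$ or $\E_A^*$ is being invoked at each step.
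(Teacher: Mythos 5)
Your proof is correct and follows essentially the same route as the paper: the same four-term triangle inequality through $\sigma_A$, $\sigma_{A,P}$ and $\phi_{\tilde{A}}$, the same positivity arguments converting $1$-norms to traces for the first two summands, and Lemma \ref{lemma1} for the last. The only (harmless) difference is the middle term, where your use of $\tilde{\Lambda}\le\Lambda$ together with the fixed-point equation gives $\|\sigma_A-\sigma_{A,P}\|_1\le\delta$ directly, whereas the paper first trades $\tilde{\Lambda}$ for $\Lambda$ at cost $\delta$ and obtains $2\delta$; both fit within the stated $(2L+3)\delta$.
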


\begin{proof}
By using the triangular inequality and the fact that $\|\cdot\|_2\le \|\cdot\|_1$,
\begin{eqnarray*}
\| \rho_A  - \rho_{\tilde{A}}\|_2  &\leq& \|
\rho_A - \sigma_A\|_1 +  \| \sigma_A -
\sigma_{A,P}\|_1 \\ &+& \|
\sigma_{A,P} - \phi_{\tilde{A}}
\|_2 + \| \phi_{\tilde{A}} -
\rho_{\tilde{A}}\|_1\; 
\end{eqnarray*}
The first term can be calculated exactly
\begin{eqnarray*}
\| \rho_A - \sigma_A\|_1 &=& \sum_{i_1 , \ldots,
i_L} \rm tr \Big [ A^{\dagger}_{i_L} \cdots A^{\dagger}_{i_1}
(\Lambda - \tilde{\Lambda}) A_{i_1} \cdots A_{i_L} \Big ] \\ &=&
\delta\; .
\end{eqnarray*}
The first equality holds because the operator is positive and the
1-norm can be replaced by a trace and the second one holds because $\E$ is trace preserving. The second term can be bounded in a similar
way.
\begin{eqnarray*}
\| \sigma_A - \sigma_{A,P}\|_1 &=&
\tr{P^{\perp} \sum_{i_1, \ldots , i_L} A^{\dagger}_{i_L} \cdots
A^{\dagger}_{i_1} \tilde{\Lambda} A_{i_1} \cdots A_{i_L} P^{\perp}} \\
&\leq& \delta + \tr{P^{\perp} \sum_{i_2, \ldots , i_L}
A^{\dagger}_{i_L} \cdots A^{\dagger}_{i_2} \Lambda A_{i_2}\cdots
A_{i_L}}
\end{eqnarray*}
This holds because $\| \Lambda - \E(\tilde{\Lambda})\|_1 = \|
\E(\Lambda - \tilde{\Lambda}) \|_1 = \delta$, since $\E$ is trace
preserving and $\E(\Lambda) = \Lambda$. Therefore, $\|
\sigma_A - \sigma_{A,P}\|_1 \leq \delta +
\tr{P^{\perp} \Lambda} = 2 \delta$.

Finally, the last term can be bounded using Lemma \ref{lemma1}
because
\begin{equation*}
\| \phi_{\tilde{A}}  -
\rho_{\tilde{A}}\|_1 = -1 +
\tr{\tilde{\E}^L(\Lambda)} \leq 2\delta L
\end{equation*}
We obtain the result by collecting all bounds above.
\end{proof}

Now 
\begin{eqnarray*}
&& \|\sigma_{A,P}- \phi_{\tilde{A}} \|_2^2 \\
&\le& \Big [ \Big ( \tr{Q (\EE^{*})^L (\tilde{\Lambda} \otimes
\tilde{\Lambda}) \EE^L Q}  - \tr{Q (\FF^{*})^L (\tilde{\Lambda}
\otimes \tilde{\Lambda}) \FF^L Q} \Big ) \\
&+& \Big ( \tr{Q (\FF^{*})^L (\tilde{\Lambda} \otimes
\tilde{\Lambda}) \FF^L Q} - \tr{Q
(\tilde{\EE}^{*})^L (\tilde{\Lambda}\otimes \tilde{\Lambda})
\tilde{\EE}^L Q} \Big ) \Big ]
\end{eqnarray*} where $\EE=\sum_iA_i\otimes \bar{A_i}$,  $Q = P\otimes P$ and $\FF = (\1 \otimes P ) \EE
(\1 \otimes P)$.

\qquad \\
We have now all the necessary tools to prove the main Theorem:

\begin{proof}[Proof of the Theorem]
We start by bounding the term $\mu = \Big | \tr{Q (\EE^{*})^L
(\tilde{\Lambda} \otimes \tilde{\Lambda}) \EE^L Q}  - \tr{Q
(\FF^{*})^L (\tilde{\Lambda} \otimes \tilde{\Lambda}) \FF^L Q}
\Big |$. This can be done by adding and subtracting terms such that
they differ in one projector, i.e.
\begin{eqnarray*}
\mu &\leq& \sum_{r=1}^{L-1} \Big | \tr{\FF^L Q (\FF^{*})^{r-1}\EE^* (\1 \otimes
P^{\perp}) (\EE^{*})^{L-r}(\tilde{\Lambda} \otimes
\tilde{\Lambda})} \Big | \\ &+& \sum_{s=1}^{L-1} \Big | \tr{\EE^s (\1 \otimes
P^{\perp})\EE \FF^{L-s-1} Q (\EE^{*})^L (\tilde{\Lambda} \otimes
\tilde{\Lambda})} \Big | \\ &=& \sum_r \mu_r + \sum_s \nu_s
\end{eqnarray*}
 Let us bound the first family of terms. By applying the Schwarz inequality  $|\tr{ \sum_i A_i B_i }|\leq \left|\tr{\sum_i A^{\dagger}_i A_i
}\right|^{\frac{1}{2}} \left|\tr{\sum_i B_i B^{\dagger}_i }\right|^{\frac{1}{2}}$,
\begin{widetext}
{\small
\begin{equation*}
\mu_r  = \left|{\rm tr}\left[  \sum_{\begin{subarray}{l} k_1,\ldots, k_L\\i_1,\ldots , i_r\\ j_1,\ldots, j_{L-r} \end{subarray}} \left(\sqrt{\tilde{\Lambda}} A_{k_1}\cdots A_{k_L}PA^{\dagger}_{i_1} \cdots A^{\dagger}_{i_r} A^{\dagger}_{j_1} \cdots A^{\dagger}_{j_{L-r}} \tilde{\Lambda}^{1/4} \otimes  \tilde{\Lambda}^{1/4}\right) \right.\right.\cdot\end{equation*}
\begin{equation*}\left.\left.\left(\overline{ \tilde{\Lambda}^{1/4} \otimes \tilde{\Lambda}^{1/4} \tilde{A}_{k_1}\cdots  \tilde{A}_{k_L}\tilde{A}^{\dagger}_{i_1} \cdots \tilde{A}^{\dagger}_{i_{r-1}} A_{i_r}^\dagger P^{\perp} A^{\dagger}_{j_1} \cdots A^{\dagger}_{j_{L-r}} \sqrt{\tilde{\Lambda}}}\right) \right]\right|
\end{equation*}
\begin{equation*}
\leq \left|\tr{\sum_{\begin{subarray}{l} k_1,\ldots, k_L\\i_1,\ldots , i_r\\
j_1,\ldots, j_{L-r} \end{subarray}} \tilde{\Lambda}^{1/4} A_{j_{L-r}}\cdots A_{j_1}
A_{i_r} \cdots A_{i_1} P A^{\dagger}_{k_L}\cdots A^{\dagger}_{k_1} \tilde{\Lambda} A_{k_1}\cdots A_{k_L} PA^{\dagger}_{i_1} \cdots
A^{\dagger}_{i_r} A^{\dagger}_{j_1} \cdots A^{\dagger}_{j_{L-r}}\tilde{\Lambda}^{1/4}
\otimes \tilde{\Lambda}^{1/2} }\right|^{\frac{1}{2}} \cdot
\end{equation*}
\begin{equation*}
\cdot
 \left|\tr{\sum_{\begin{subarray}{l}k_1,\ldots, k_L\\
i_1,\ldots , i_r\\ j_1,\ldots, j_{L-r} \end{subarray} }\tilde{\Lambda}^{1/2} \otimes \tilde{\Lambda}^{1/4} \tilde{A}_{k_1}\cdots  \tilde{A}_{k_L}  \tilde{A}^{\dagger}_{i_1} \cdots
\tilde{A}^{\dagger}_{i_{r-1}}A_{i_r}^\dagger P^{\perp}
A^{\dagger}_{j_1} \cdots
A^{\dagger}_{j_{L-r}}\tilde{\Lambda}A_{j_{L-r}} \cdots A_{j_1}
P^{\perp}A_{i_r}\tilde{A}_{i_{r-1}} \cdots \tilde{A}_{i_1}\tilde{A}^{\dagger}_{k_L}\cdots \tilde{A}^{\dagger}_{k_1}\tilde{\Lambda}^{1/4}
}\right|^{\frac{1}{2}}\; .
\end{equation*}}
\end{widetext}
 The
first term is equal to 
\begin{eqnarray*}&& \tr{\tilde{\Lambda}^{1/2}}^{1/2}\tr{P\E^L(\tilde{\Lambda})P \tilde{\E}^L(\tilde{\Lambda}^{1/2})}^{1/2} \\ &\le& \tr{\tilde{\Lambda}^{1/2}}. 
\end{eqnarray*} 

The second term is equal to
\begin{eqnarray*}
&& \tr{\tilde{\Lambda}^{1/2}}^{1/2}\tr{\tilde{\E}^{r-1}\circ\E\left(P^\perp \E^{L-r}(\tilde{\Lambda})P^\perp\right) \tilde{\E}^L(\tilde{\Lambda}^{1/2})}^{1/2}\\ &\le& \delta^{1/2}\tr{\tilde{\Lambda}^{1/2}}\,
\end{eqnarray*}
where we have used that $\tilde{\Lambda} \leq \Lambda$ (hence,  $\tr{P^\perp \E^{L-r}(\tilde{\Lambda})P^\perp}\le \delta$), and that both $\E$ and $\tilde{\E}$ are contractible for the trace norm. Therefore,
$\mu_r \le \tr{\tilde{\Lambda}^{1/2}}^2\sqrt{\delta}$. The result for the $\nu_s$
is exactly the same, so it
follows that $\mu \leq 2 L\tr{\tilde{\Lambda}^{1/2}}^2\sqrt{\delta}$.

The other term can be calculated in the same way, by replacing
$\EE \rightarrow \FF$ and $\FF \rightarrow \tilde{\EE}$, and it gives exactly the same estimate.

The second inequality follows from the first one, $\tr{\tilde{\Lambda}^{1/2}}\le \sqrt{\tilde{D}}\tr{\Lambda}$, and the fact that $\sigma_{A,P} - \phi_{\tilde{A}}$ has rank $\le 2\tilde{D}$, which then gives $$\| \sigma_{A,P} - \phi_{\tilde{A}}\|_1\le \sqrt{2\tilde{D}}\|\sigma_{A,P} - \phi_{\tilde{A}}\|_2.$$
\end{proof}

\section{Injectivity can be reached fast}

We will prove here the following technical lemma

\begin{lem}\label{random}
Every MPS (with the exception of a zero-measure set) of the form 
\begin{equation}
\vert \tilde{\psi} \rangle = \sum_{\begin{subarray} {l} i_{1}, \ldots, i_L \\  i_{L+1}, \ldots, i_N\end{subarray}} \text{tr} \left( A_{i_1} \cdots A_{i_L} B_{i_{L+1}}  C_{i_{L+2}} \cdots  C_{i_N} \right) \vert i_1 \cdots i_N \rangle
\end{equation} where $A_i, B_j, C_k \in \mathcal{M}_{D \times D}$ and $L\ge \frac{2\log D}{\log d}$ reaches injectivity in every region of length $L-1.$
\end{lem}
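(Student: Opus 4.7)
The plan is to reduce injectivity of every length-$(L-1)$ window to a Zariski-open condition on the affine space of coefficients of $(A,B,C)$, then use the standard dichotomy that the complement of a nonempty Zariski-open subset of complex affine space has zero Lebesgue measure, and finally to produce a single explicit tuple where all windows are simultaneously injective. The last step is the crux, and it is where quantum expanders enter.

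First, I would enumerate the finitely many possible window patterns. A length-$(L-1)$ window starting at position $s$ contains either only $A$-matrices (when $s+L-2\le L$), only $C$-matrices (when $s\ge L+2$), or a mixed pattern of the form $\underbrace{A\cdots A}_{k}\,B\,\underbrace{C\cdots C}_{L-2-k}$ for some $0\le k\le L-2$ (when the window covers site $L+1$). For each such pattern with matrices $M^{(1)},\ldots,M^{(L-1)}$, injectivity of
\begin{equation}
\Gamma(X)=\sum_{i_1,\ldots,i_{L-1}}\tr{X\,M^{(1)}_{i_1}\cdots M^{(L-1)}_{i_{L-1}}}\,|i_1\cdots i_{L-1}\rangle
\end{equation}
is equivalent to the $d^{L-1}\times D^2$ matrix of $\Gamma$ having rank $D^2$. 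The hypothesis $L\ge 2\log D/\log d$ guarantees that $d^{L-1}\ge D^2$ (up to the standard off-by-one convention), so the rank condition is dimensionally achievable. Its failure is cut out by the simultaneous vanishing of all $D^2\times D^2$ minors of $\Gamma$, which are polynomials in the entries of $(A,B,C)$; thus the bad locus $Z_{\mathrm{pat}}$ for that pattern is Zariski-closed.

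Second, since there are only finitely many patterns, the full bad set $Z=\bigcup_{\mathrm{pat}} Z_{\mathrm{pat}}$ is also Zariski-closed. The standard algebraic-geometric fact (recorded as Lemma 11 in Appendix C, following \cite{ag,ag2}) says that a proper Zariski-closed subset of a complex affine space has Lebesgue measure zero. Consequently the lemma reduces to showing that $Z$ is a \emph{proper} subset of the parameter space---equivalently, to exhibiting a single tuple $(A,B,C)$ for which every window of length $L-1$ yields an injective $\Gamma$.

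The construction of such a witness is the genuine obstacle, because the threshold $L\ge 2\log D/\log d$ is essentially information-theoretically optimal: naive constructions of $D\times D$ matrices generally require injectivity windows of size $\Theta(D)$, which is exponentially too large. The rescue is provided by quantum expanders: Hastings-type constructions give, for every $D$ and $d$, an explicit family $\{M_i\}_{i=1}^d\subset\mathcal{M}_{D\times D}$ whose associated translationally invariant MPS reaches injectivity on a window of length $\lceil 2\log D/\log d\rceil$. Taking $A_i=B_i=C_i=M_i$ collapses every window pattern to a length-$(L-1)$ window of expander matrices, and simultaneously certifies injectivity for all patterns. This single witness shows $Z\subsetneq$ parameter space, so $Z$ has Lebesgue measure zero, which is exactly what the lemma asserts.
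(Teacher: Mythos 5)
Your overall strategy is the same as the paper's: the failure of injectivity on any fixed window is a Zariski--closed condition on the entries of $(A,B,C)$, a finite union of such loci is still Zariski--closed, a proper closed subvariety of the (irreducible) parameter space has measure zero, and everything therefore reduces to exhibiting one witness tuple, which is sought among quantum expanders with $A_i=B_i=C_i$. Two small slips in the reduction: your enumeration of window patterns omits the wrap--around windows $C\cdots C\,A\cdots A$ that straddle the seam of the trace (harmless for the argument, since there are still finitely many patterns and the witness is translationally invariant, but they should be listed if one claims injectivity of \emph{every} region of length $L-1$); and at the stated threshold $L\ge \frac{2\log D}{\log d}$ one only gets $d^{L-1}\ge D^2/d$, not $d^{L-1}\ge D^2$, so the ``off--by--one convention'' you invoke is doing real work (this imprecision is inherited from the lemma as stated).

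The genuine gap is in the witness step. You assert that Hastings--type expanders reach injectivity on windows of length $\lceil 2\log D/\log d\rceil$, but the spectral argument cannot deliver that constant. The injectivity criterion obtained from the expander is $\sup_{\tr{X^\dagger X}=1}\left|\Gamma_n(X)^\dagger\Gamma_n(X)-\tfrac{1}{D}\tr{X^\dagger X}\right|\le D|\lambda_2|^n$, which forces $|\lambda_2|^n<D^{-2}$, i.e.\ $n>2\log D/\log(1/|\lambda_2|)$; since the best achievable second eigenvalue for a channel with $d$ Kraus operators scales as $|\lambda_2|\approx 2\sqrt{d-1}/d\approx 2/\sqrt{d}$, this yields $n\gtrsim 4\log D/\log d$ --- a factor of $2$ worse than the lemma's threshold, and this loss is intrinsic to any $|\lambda_2|$--based bound, not an artifact of sloppy constants. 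The paper is explicit about this: its expander theorem proves injectivity only for $n\ge\left[\frac{k\log D}{\log d}\right]+1$ with $k$ close to $4$ (and $d>16$, $D$ large), flags in a footnote that the analytic proof gives ``a slightly worse condition'' than the lemma states, and covers the sharp constant $2\log D/\log d$ only by numerical verification up to $D=200$, $d=50$. So your proof as written establishes the lemma only with $2\log D/\log d$ replaced by roughly $4\log D/\log d$ (which, as the paper notes, still suffices for Theorem~2 after adjusting constants); to get the stated threshold you would need either a different witness or an argument beyond the second--eigenvalue bound.
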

\begin{proof}
Since the set of MPS failing this property is clearly a projective algebraic subvariety of $\left( \C^D\otimes \C^D\otimes \C^d \right)^{\otimes 3}$, standard algebraic geometry tells us that, if this set is non-empty, since $\left( \C^D\otimes \C^D\otimes \C^d \right)^{\otimes 3}$ is irreducible then both projective varieties must be equal \footnote{It will be a straightforward conclusion after relating the following ideas concerning projective sets such that $X \subset Y$ (which always implies $\dim X \leq \dim Y$). On the one hand, if $X$ is a non-empty open subset of $Y,$ then $\dim X = \dim Y$. On the other hand, if $\dim X = \dim Y$ and $Y$ is irreducible then $X = Y.$ See Refs. \cite{ag, ag2} for more details about these results and related concepts in algebraic geometry}. Therefore, it is enough to find a single MPS reaching injectivity as stated in this lemma, which has been verified numerically up to $D=200$ and $d=50,$ and also analytically in the next Lemma of this Appendix using quantum expanders \footnote{Note that our analytical proof gives a slightly worse condition for the $L$ needed to reach injectivity (in terms of $D$ and $d$), but suffices, nevertheless, to prove the main Theorem.}.
\end{proof}

It is proven in \cite{expanders} that for all $d \geq 4$, there exists a Hermitian
trace-preserving completely positive map
$$\E(X)=\sum_{i=1}^d A^\dagger_iXA_i$$ such that $|\lambda_2|\le \left(\frac{2\sqrt{d-1}}{d}\right)\left(1+O\left(\log(D) D^{\frac{-2}{15}}\right)\right)$,
where $A_i\in M_D$.

Take the MPS $|\psi\>$ generated by the matrices $A_i$ and consider the map
$$\Gamma_n(X)=\sum_{i_1\cdots i_n}\tr{XA_{i_1}\cdots A_{i_n}}|i_1\cdots i_n\>$$
We want to show
 \begin{thm}
 Assuming $D$ is large enough, $\Gamma_n$ is an injective map for \footnote{Indeed, $k$ can be made arbitrarily close to $4$ at the price of enlarging $d$}  $$n\ge \left[\frac{k \log(D)} {\log(d)}\right]+1,\: K=8, d>16$$
 \end{thm}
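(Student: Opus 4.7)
The plan is to reduce injectivity of $\Gamma_n$ to a spectral-gap estimate for the transfer matrix $T=\sum_i A_i\otimes A_i^*$ on $\mathbb{C}^{D^2}$, and then feed in the quantum-expander bound on $|\lambda_2|$. First, a short vectorization gives
\[
\|\Gamma_n(X)\|^2 \;=\; \sum_{\vec{i}}|\text{tr}[X A_{\vec{i}}]|^2 \;=\; \text{tr}\bigl[(X\otimes X^*)\,T^n\bigr],
\]
using $|\text{tr}[XA]|^2=\text{tr}[(X\otimes X^*)(A\otimes A^*)]$ and the fact that independent sums over each $i_j$ factor the ordered product of $(A_{i_j}\otimes A_{i_j}^*)$'s into $T^n$. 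Since $\mathcal{E}$ is self-adjoint with respect to the Hilbert-Schmidt inner product, a direct check shows that $T$ coincides with the vectorized superoperator of $\mathcal{E}$ and is a Hermitian matrix on $\mathbb{C}^{D^2}$; in particular its spectrum is real and equals that of $\mathcal{E}$.

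Next I would locate the principal eigenvector. The trace-preservation identity $\sum_i A_iA_i^\dagger=I$ yields $T|\Omega\rangle=|\Omega\rangle$ with $|\Omega\rangle=\sum_k|kk\rangle$, while the expander bound gives $|\lambda_j|\le \frac{2\sqrt{d-1}}{d}(1+o_D(1))<\frac{2}{\sqrt{d}}$ for every other eigenvalue, once $D$ is large enough to absorb the $(1+O(\log(D)D^{-2/15}))$ correction. Decomposing $T^n=\tfrac{1}{D}|\Omega\rangle\langle\Omega|+R^n$ with $\|R^n\|_\infty\le|\lambda_2|^n$ and using $\langle\Omega|X\otimes X^*|\Omega\rangle=\|X\|_2^2$ and $\|X\otimes X^*\|_2=\|X\|_2^2$, I would estimate
\[
\|\Gamma_n(X)\|^2 \;\ge\; \frac{\|X\|_2^2}{D} - \|R^n\|_2\,\|X\otimes X^*\|_2 \;\ge\; \|X\|_2^2\Bigl(\frac{1}{D}-D\,|\lambda_2|^n\Bigr),
\]
so $\Gamma_n$ is injective as soon as $|\lambda_2|^n<D^{-2}$.

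For the final step, plugging in $|\lambda_2|<2/\sqrt{d}$ reduces the sufficient condition to $n>\frac{4\log D}{\log(d/4)}$. The elementary comparison $\frac{8\log D}{\log d}\ge\frac{4\log D}{\log(d/4)}$ is equivalent to $d^2/16\ge d$, i.e.\ $d\ge 16$, which is exactly the threshold recorded in the statement. Hence for $d>16$ and $D$ large, the hypothesis $n\ge \lceil 8\log D/\log d\rceil+1$ is sufficient, and letting $d$ grow drives the constant $k$ towards $4$, matching the remark in the footnote.

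The main obstacle I expect is the structural step rather than the estimate itself: I need $T$ to be genuinely Hermitian so that $T^n$ splits cleanly as the rank-one projector plus $R^n$ with $\|R^n\|_\infty=|\lambda_2|^n$ and no Jordan-block prefactor spoils the bound. This rests on the explicit Hermitian form of the expander from \cite{expanders} (equivalently, on choosing Kraus operators that realize $\mathcal{E}^*=\mathcal{E}$). Once that is in place, the spectral-gap estimate and the arithmetic check $d>16$ are immediate.
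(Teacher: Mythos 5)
Your proof is correct and follows essentially the same route as the paper: both reduce injectivity to the condition $|\lambda_2|^n<D^{-2}$ by comparing $\|\Gamma_n(X)\|^2$ with $\frac{1}{D}\|X\|_2^2$, picking up the factor $D$ from the Hilbert--Schmidt versus operator-norm comparison on $\mathcal{M}_{D^2}$, using Hermiticity of the expander channel of \cite{expanders} to rule out Jordan-block losses, and then running the same closing arithmetic (your direct derivation of $n>4\log D/\log(d/4)$ and the check $d^2/16\ge d$ is in fact a cleaner path to the stated constants $k=8$, $d>16$). The only cosmetic difference is that you obtain the key estimate by Cauchy--Schwarz applied to $\mathrm{tr}[(X\otimes\bar{X})T^n]$ after splitting off the rank-one projector $\frac{1}{D}|\Omega\rangle\langle\Omega|$, whereas the paper's Lemma 13 reaches the identical bound $D|\lambda_2|^n$ by realigning the coordinates of $\Gamma_n^*\Gamma_n-\frac{1}{D}\1$ into $\mathcal{E}^n-\mathcal{E}^\infty$ and invoking the rearrangement invariance of the Hilbert--Schmidt norm.
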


This will be a consequence of the following

 \begin{lem}\label{lem}
$$ \sup_{\tr{X^\dagger X}=1} \left|\Gamma_n(X)^\dagger\Gamma_n(X)-\frac{1}{D}\tr{X^\dagger X}\right|\le D|\lambda_2|^n$$
 \end{lem}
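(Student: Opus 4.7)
The plan is to recast $\Gamma_n(X)^\dagger\Gamma_n(X)=\sum_{\vec i}|\tr{XB_{\vec i}}|^2$ (with $B_{\vec i}:=A_{i_1}\cdots A_{i_n}$) as the expectation value of a single operator on $\C^D\otimes\C^D$ in the vectorization of $X$, to recognize that operator as (a unitary reshuffle of) the Choi matrix of $\E^n$, and finally to exploit the spectral gap of the expander to bound its distance to the Choi matrix of the depolarizing limit.

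Concretely, using the vectorization $|X\rangle\!\rangle=\sum_{ab}X_{ab}|a\rangle|b\rangle$, the identity $\tr{XB}=\langle\!\langle B^\dagger|X\rangle\!\rangle$ gives
\begin{equation*}
\Gamma_n(X)^\dagger\Gamma_n(X)=\langle\!\langle X|\,S_n\,|X\rangle\!\rangle,\qquad S_n:=\sum_{\vec i}|B_{\vec i}^\dagger\rangle\!\rangle\langle\!\langle B_{\vec i}^\dagger|,
\end{equation*}
while $\tfrac{1}{D}\tr{X^\dagger X}=\langle\!\langle X|\tfrac{1}{D}\1|X\rangle\!\rangle$. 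Since $\{B_{\vec i}^\dagger\}$ is a Kraus family of $\E^n$, $S_n$ coincides with the Choi matrix $J_{\E^n}$ up to a swap of the two tensor factors, which preserves the operator norm. Analogously $\tfrac{1}{D}\1$ is (swap-equivalent to) the Choi matrix of the completely depolarizing channel $\mathcal P(Y):=\tfrac{\tr{Y}}{D}\1_D$. The statement thus reduces to
\begin{equation*}
\|J_{\E^n}-J_{\mathcal P}\|_{\mathrm{op}}=\|J_{\E^n-\mathcal P}\|_{\mathrm{op}}\le D|\lambda_2|^n.
\end{equation*}

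For the spectral bound I would invoke that the expander from \cite{expanders} is Hermitian, i.e., self-adjoint with respect to the Hilbert--Schmidt inner product on $M_D$; thus $\E$ admits an HS-orthonormal spectral decomposition $\E=\sum_\mu\lambda_\mu P_\mu$ with real $\lambda_\mu$, with $\lambda_1=1$ non-degenerate and its spectral projector exactly $\mathcal P$, and $|\lambda_\mu|\le|\lambda_2|$ for $\mu\ne 1$. Hence $\E^n-\mathcal P=\sum_{\mu\ne 1}\lambda_\mu^{\,n} P_\mu$ and
\begin{equation*}
\|\E^n-\mathcal P\|_{\mathrm{HS}}^2=\sum_{\mu\ne 1}|\lambda_\mu|^{2n}\le(D^2-1)|\lambda_2|^{2n}\le D^2|\lambda_2|^{2n}.
\end{equation*}
Combining the general identity $\|J_{\mathcal T}\|_2=\|\mathcal T\|_{\mathrm{HS}}$, verified by expanding both sides in the matrix-unit basis of $M_D$, with the elementary bound $\|M\|_{\mathrm{op}}\le\|M\|_2$ on $M_{D^2}$, yields $\|J_{\E^n-\mathcal P}\|_{\mathrm{op}}\le D|\lambda_2|^n$. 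The claim then follows from $|\langle\!\langle X|T|X\rangle\!\rangle|\le\|T\|_{\mathrm{op}}\langle\!\langle X|X\rangle\!\rangle$ applied to $T=S_n-\tfrac{1}{D}\1$ together with the normalization $\tr{X^\dagger X}=1$.

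The main technical subtlety is making the vectorization/Choi-matrix conventions consistent in the first step, so that $S_n$ is correctly identified with a unitarily reshuffled Choi matrix of $\E^n$; after that, everything is a clean chain of norm inequalities. The factor $D$ in the final bound is precisely the cost of passing from the HS to the operator norm on $M_{D^2}$, while the factor $|\lambda_2|^n$ comes from the expander spectral gap.
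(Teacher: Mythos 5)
Your proposal is correct and follows essentially the same route as the paper's proof: both identify the quadratic form with a Hermitian operator on $\C^D\otimes\C^D$, recognize that operator as a rearrangement (reshuffling/Choi matrix) of the matrix of $\E^n$, exploit the invariance of the Hilbert--Schmidt norm under that rearrangement, and extract the factor $D|\lambda_2|^n$ from the Hermiticity and spectral gap of the expander channel. The only cosmetic difference is that you phrase the rearrangement in Choi-matrix language and bound $\|\E^n-\mathcal{P}\|_{\mathrm{HS}}$ directly by the spectral sum (giving the marginally sharper $\sqrt{D^2-1}$), whereas the paper writes the reshuffle in coordinates and uses $\|\cdot\|_2\le D\|\cdot\|_{\mathrm{op}}$ before evaluating $\|\E^n-\E^\infty\|_{\mathrm{op}}=|\lambda_2|^n$.
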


\begin{proof}
	
	 Considering in $\mathcal{M}_D$  the usual Hilbert-Schmidt Hilbert structure, it is easy to see that the LHS is equal to
$$	 \left\| \Gamma_n^* \Gamma_n - \frac{1}{D} \1 \right\|_\text{op} $$ for the usual operator norm on the Hilbert space $\mathcal{M}_D$.

Moreover, in coordinates, calling $\mathcal{E}=\sum_{i}A_i\otimes \bar{A_i}$, we have that
$$ \Gamma_n^* \Gamma_n - \frac{1}{D} \1 =\sum_{abcd} \left(\<cd|\mathcal{E}^n|ab\> -\frac{1}{D}\delta_{ab}\delta_{cd}\right) |bd\>\<ac| \;.$$
just identifying $\mathcal{M}_D=\C^D\otimes \C^D$ and calling $|ij\>$ to the canonical (matrix) basis there.

Since for each operator on an $n$ dimensional Hilbert space, $\|\cdot\|_{op}\le \|\cdot\|_2\le\sqrt{n} \|\cdot\|_{op}$, being $\|\cdot\|_2$ the Hilbert-Schmidt norm, and using that the Hilbert-Schmidt norm is invariant under arbitrary rearrangements of the coordinates, we get that
\begin{eqnarray*}
&& \left\| \Gamma_n^* \Gamma_n - \frac{1}{D} \1 \right\|_\text{op} \\ &\le& D
\left\|\sum_{abcd} \left(\<cd|\mathcal{E}^n|ab\> -\frac{1}{D}\delta_{ab}\delta_{cd}\right) |ab\>\<cd|\right\|_{op} \\ &=& D
\|\mathcal{E}^n-\frac{1}{D}|\1\>\<\1|\|_{op}=D\|\mathcal{E}^n-\mathcal{E}^\infty\|_{op}\\ &=&D\|\E^n-\E^\infty\|_{op} =D|\lambda_2|^n
\end{eqnarray*} where we have used in the last step that $\E$ is hermitian and $|\1\>$ denotes the unnormalized vector $\sum_{i=1}^D |ii\>$.

\end{proof}

\begin{figure}
\includegraphics[width=8cm]{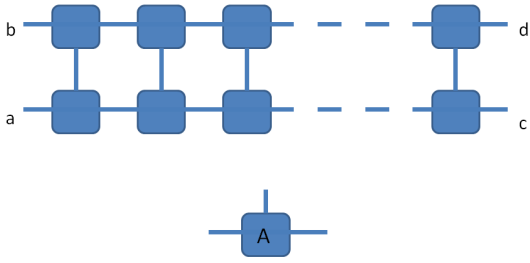}
\caption{Given the tensor $A=(\<\alpha|A_i|\beta\>)_{i\alpha\beta}$ which defines the MPS, and with the usual convention that rotating means complex conjugation, we can represent the map $\Gamma_n^*\Gamma_n$ as the map in the figure from systems $ac$ to systems $bd$ and the map $\E^n$ as the same figure but now mapping systems $cd$ to systems $ab$}\label{Fig1}
\end{figure}

\begin{proof}[Proof of Theorem 12]
$\Gamma_n$ must be injective as long as
\begin{equation}\label{eq:lambda2}|\lambda_2|^n<\frac{1}{D^2}.\end{equation} Otherwise, taking a (normalized) $X$ such that $\Gamma_n(X)=0$, we would get a contradiction to Lemma \ref{lem}.
	Since we know from \cite{expanders} that $|\lambda_2|\le \left(\frac{2\sqrt{d-1}}{d
	}\right)\left(1+O\left(\log(D) D^{\frac{-2}{15}}\right)\right)$ it suffices to take
	$n$ such that
		$$ \left(\frac{2\sqrt{d-1}}{d
	}\right)^{n}\left(1+O\left(\log(D) D^{\frac{-2}{15}}\right)\right)^{n}< \frac{1}{D^2}\; .$$
		Taking logarithms
	   $$ 2\log(D) + n \log \left[	\left({2\sqrt{d-1} \over d
	}\right)\left(1+O\left(\log(D) D^{{-2 \over 15}}\right)\right)  \right] < 0  $$
		which is equivalent to
		$$  n  > {2\log(D) \over \log \left[	\left({d \over 2\sqrt{d-1}
		}\right) \right] - \log \left(1+O\left(\log(D) D^{{-2 \over 15}}\right)\right) }  $$
It is clear that taking $D$ large enough we can upper-bound the RHS by
$$\left[{2\log(D) \over \log \left(	\left({d \over 2\sqrt{d-1}}\right) \right)}\right]+1$$
But now
\begin{align*}
{2\log(D) \over \log \left(	\left({d \over 2\sqrt{d-1}}\right) \right)} &= {4\log(D) \over 2\log(d) - \log (4) - \log(d-1)} \\
&\leq  \frac{4K \log(D)} {\log( d)}
\end{align*}
as long as $\frac{1}{K}\le 1-\frac{2}{\log d}$, which finishes the proof of the Theorem.
\end{proof}

\section{Some results for non translational invariant MPS}

\begin{lem}\label{new-state}
Let $A_i, \Lambda \in \mathcal{M}_{D}$, then there exist $B_i, C_i \in \mathcal{M}_{D}$ such that if we consider the state
\begin{equation}
\vert \psi \rangle = \sum_{\begin{subarray} {l} i_{1}, \ldots, i_L \\  i_{L+1}, \ldots, i_N\end{subarray}} \text{tr} \left( A_{i_1} \ldots A_{i_L} B_{i_{L+1}}  C_{i_{L+2}} \ldots  C_{i_N} \right) \vert i_1, \ldots i_N \rangle
\end{equation} then the normalized reduced density matrix for $L$ particles (particles 1-$L$) is
\begin{equation}
\rho_{1...L} = \sum_{\begin{subarray}{l} i_1, \ldots, i_L \\ j_1, \ldots, j_L \end{subarray}} \text{tr} \left( A^\dagger_{j_L} \ldots A^{\dagger}_{j_1} \Lambda A_{i_1} \ldots A_{i_L} \right) \vert  i_1, \ldots i_N \rangle  \langle j_1, \ldots j_L \vert
\end{equation}
\end{lem}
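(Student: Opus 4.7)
The strategy is to compute the reduced density matrix explicitly on the doubled bond space $\C^D\otimes\C^D$, recognise the desired target as a specific rank-one operator there, and then construct $B,C$ realising it.

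First, applying $\tr{XY}\,\overline{\tr{X'Y'}}=\tr{(X\otimes\bar X')(Y\otimes\bar Y')}$ to $|\psi\rangle\langle\psi|$ and tracing out sites $L+1,\ldots,N$ yields
\begin{equation*}
(\rho_{1\ldots L})_{\vec i,\vec j}=\tr{(A_{i_1}\otimes\bar A_{j_1})\cdots(A_{i_L}\otimes\bar A_{j_L})\,\mathcal{M}_{B,C}},
\end{equation*}
with $\mathcal{M}_{B,C}:=\mathcal{B}\,\mathcal{C}^{\,N-L-1}$, $\mathcal{B}:=\sum_k B_k\otimes\bar B_k$ and $\mathcal{C}:=\sum_k C_k\otimes\bar C_k$ all acting on $\C^D\otimes\C^D$. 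Introducing the vectorisation $|X\rangle\rangle:=\sum_{ab}X_{ab}|ab\rangle$, which satisfies $(P\otimes Q)|X\rangle\rangle=|PXQ^T\rangle\rangle$, the very same identity applied to the right-hand side of the lemma rewrites it as the same trace with $\mathcal{M}_{B,C}$ replaced by the rank-one operator $\mathcal{M}_\Lambda:=|\1\rangle\rangle\langle\langle\Lambda|$, the natural matrix representation of the completely positive map $\Phi_\Lambda(Y):=\tr{\Lambda Y}\,\1$.

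Thus the lemma boils down to finding $B,C$ such that $\mathcal{B}\,\mathcal{C}^{\,N-L-1}=\mathcal{M}_\Lambda$. My first move is to get rid of the $C$ factors by setting $C_1=\1_D$ and $C_j=0$ for $j\ge 2$, which gives $\mathcal{C}=\1_{D^2}$ and reduces the condition to $\mathcal{B}=\mathcal{M}_\Lambda$, i.e.\ the channel implemented by $\{B_k\}$ must coincide with $\Phi_\Lambda$. Since $\Lambda$ is positive semidefinite in every application of this lemma (in the main text it plays the role of the positive fixed point $\tilde\Lambda=P\Lambda P$ of a canonical cp map), $\Phi_\Lambda$ is CP, and writing $\Lambda=\sum_a\lambda_a|u_a\rangle\langle u_a|$ provides the explicit Kraus set $B_{(k,a)}:=\sqrt{\lambda_a}\,|k\rangle\langle u_a|$ for $k,a=1,\ldots,D$. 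A direct check gives $\sum_{k,a}B_{(k,a)}YB_{(k,a)}^\dagger=\tr{\Lambda Y}\,\1_D$, and normalising $|\psi\rangle$ by the overall factor $\tr{\Lambda}$ then produces the stated formula for $\rho_{1\ldots L}$.

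The only real obstacle is dimensional: the Kraus set above has $D\cdot\mathrm{rank}(\Lambda)$ elements, whereas the index $k$ in $B_k$ ranges only over $\{1,\ldots,d\}$. When $d<D\cdot\mathrm{rank}(\Lambda)$ one can no longer keep $\mathcal{C}$ trivial; the fix is to factorise $\Phi_\Lambda=\Phi_1\circ\cdots\circ\Phi_{N-L}$ as a composition of CP maps each of Kraus rank $\le d$, realise $\Phi_1$ with $B$ and the remaining $\Phi_j$'s one per $C$-site, and verify $\mathcal{M}_{B,C}=\mathcal{M}_\Lambda$ by iterating the identity $\mathcal{C}\,|Y\rangle\rangle=|\Phi_C(Y)\rangle\rangle$. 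This is always possible once $d^{N-L}\ge D\cdot\mathrm{rank}(\Lambda)$, i.e.\ for $N$ large enough.
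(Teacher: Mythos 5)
Your reduction is sound and matches the computation underlying the paper's proof: tracing out sites $L+1,\dots,N$ leaves the transfer operator $\mathcal{M}_{B,C}=\mathcal{B}\,\mathcal{C}^{N-L-1}$ closing the ring, and the lemma is equivalent to making $\mathcal{M}_{B,C}$ (essentially) equal to the rank-one operator representing $Y\mapsto \tr{\Lambda Y}\,\1$. The problem is in the construction of $B$ and $C$. Your primary construction needs $D\cdot\mathrm{rank}(\Lambda)$ Kraus operators for $B$, which the physical dimension $d$ does not allow, as you note. Your fallback then breaks the structure of the statement: you propose to factor $\Phi_\Lambda=\Phi_1\circ\cdots\circ\Phi_{N-L}$ and realise ``the remaining $\Phi_j$'s one per $C$-site'', but every site from $L+2$ to $N$ carries the \emph{same} matrices $C_k$, so the tail implements $\mathcal{C}^{N-L-1}$ for a single map $\mathcal{C}$, not a product of $N-L-1$ different maps. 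Moreover, even if site-dependent matrices were permitted, the claim that $\Phi_\Lambda$ admits an \emph{exact} factorization into CP maps of Kraus rank $\le d$ whenever $d^{N-L}\ge D\cdot\mathrm{rank}(\Lambda)$ is asserted, not proved, and it is not a counting statement: the set of $n$-fold compositions of Kraus-rank-$\le d$ maps is in general a proper subset of the maps of Kraus rank $\le d^{n}$. As it stands, the case $d<D\cdot\mathrm{rank}(\Lambda)$ --- which is the relevant one --- is not covered.

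The paper avoids both issues by giving up exactness, which is all that the application in Theorem 2 requires. It takes $\{C_k\}$ to be the Kraus operators of a single \emph{primitive}, unital, trace-preserving channel with $d$ Kraus operators (built from a rescaled diagonal unitary with incommensurate phases and a generic unitary, primitivity being checked via L\"uders' theorem), so that $\mathbb{E}_C^{\,N-L-1}(|\alpha\rangle\langle\beta|)\to\delta_{\alpha\beta}\1$ \emph{exponentially fast} in $N-L$; it then sets $B_k=\sqrt{\Lambda}\,C_k$, so that $\mathbb{E}_B(\1)=\Lambda$ by unitality. The conclusion therefore holds only up to a correction exponentially small in $N-L$ (the paper says so at the end of its proof, even though the lemma statement omits it). If you relax your goal accordingly, your construction essentially becomes the paper's: replace $\mathcal{C}=\1_{D^2}$ by any primitive $\mathcal{C}$ whose iterates converge to the rank-one projector onto its fixed points, and absorb $\sqrt{\Lambda}$ into the single $B$-site. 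Insisting on an exact identity with a single repeated $C$ and small $d$ is the step that fails.
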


\begin{proof}
We consider the channel defined as
\begin{equation}
\mathbb{E}(X) = \sum_{i=1}^d V_i X V^{\dagger}_i,
\end{equation} where $V_1 \sqrt{D}$ is a diagonal unitary matrix with different {\it incommensurable} eigenvalues (such that $V_1^k$ still has different eigenvalues for all $k \in \mathbb{N}$), $V_2 \sqrt{D}$ is a random unitary matrix with non-zero entries and $V_i = 0_D, i \in \{3, ... , d\}$. 
This channel is trace preserving and unital. On the one hand, it is trivial to see that the only matrices that commute with $V_1$ are diagonal matrices. On the other hand, to find which of these diagonal matrices commute with $V_2$ it is enough to consider the algebraic system of equations in coordinates for $[V_2, X] = 0$ from where we get that, since $(V_2)_{ij} \neq 0,$ and $(X)_{ij} = 0 \text{ if }i\neq j,$ then $(X)_{ii} - (X)_{jj} = 0$ for all $i \neq j.$ From this, we deduce that the only matrices that commute with all of the Kraus operators for our channel are multiples of the identity matrix. L\"{u}ders' Theorem \cite{luders} guarantees that our channel has the identity as its unique fixed point. Since $\mathbb{E}$ is an irreducible channel \cite{operator-wolf}, all its eigenvalues of modulus $1$ are $k$-roots of unity, where $k \in \{1, \ldots, D^2 \}$. Let $Y$ be such that $\mathbb{E}(Y)=\alpha Y$ for $|\alpha|=1$. It is clear that $\E^k(Y)=Y$ and, again by L\"uders' Theorem, $[V_1^k,Y]=0=[V_2 V_1^{k-1},Y]$. Reasoning as above, $Y$ is a multiple of the identity, which implies that $\alpha=1;$ hence, the channel is primitive \cite{operator-wolf}. 

We can define now
\begin{equation}
\begin{cases}
B_j &= \sqrt{ \Lambda } V_j \\
C_k &= V_k
\end{cases}
\end{equation} where $V_i$ are the Kraus operators for our channel. If we consider the state
\begin{equation}
\vert \psi \rangle = \sum_{\begin{subarray} {l} i_{1}, \ldots, i_L \\  i_{L+1}, \ldots, i_N\end{subarray}} \text{tr} \left( A_{i_1} \ldots A_{i_L} B_{i_{L+1}}  C_{i_{L+2}} \ldots  C_{i_N} \right) \vert i_1, \ldots i_N \rangle
\end{equation} and compute the normalized reduced density matrix for particles $1...L,$ we obtain 
\begin{widetext}
\begin{eqnarray*}
\rho_L = \sum_{\begin{subarray}{l} i_1, \ldots, i_L \\ j_1, \ldots, j_L \end{subarray}} \left( \sum_{\alpha, \beta} \langle \alpha \vert \left[ A_{i_1} \ldots A_{i_L}\mathbb{E}_B \mathbb{E}_C^{N-(L+1)}(\vert \alpha \rangle \langle \beta \vert) A^{\dagger}_{j_L} \ldots A^{\dagger}_{j_1} \right] \vert \beta \rangle \right)
\end{eqnarray*}
\end{widetext}
It is clear that $\mathbb{E}_C^{N-(L+1)} ( \vert \alpha \rangle \langle \beta \vert) = \delta_{\alpha \beta} \mathbbm{1},$ up to an exponentially small correction. This leads us to
\begin{eqnarray*}
\rho_L &=& \sum_{\begin{subarray}{l} i_1, \ldots, i_L \\ j_1, \ldots, j_L \end{subarray}} \left( \sum_{\alpha, \beta} \langle \alpha \vert \left[ A_{i_1} \ldots A_{i_L}\mathbb{E}_B (\1) A^{\dagger}_{j_L} \ldots A^{\dagger}_{j_1} \right] \vert \beta \rangle \right) \\
&=& \sum_{\begin{subarray}{l} i_1, \ldots, i_L \\ j_1, \ldots, j_L \end{subarray}} \left( \sum_{\alpha, \beta} \langle \alpha \vert \left[ A_{i_1} \ldots A_{i_L}\Lambda A^{\dagger}_{j_L} \ldots A^{\dagger}_{j_1} \right] \vert \beta \rangle \right)
\end{eqnarray*} once again, up to a exponentially small correction, just as we wanted to prove.
\end{proof}

\end{document}